\newcommand{\rd}{\partial}
\newcommand{\CC}{\mathbf{C}}
\newcommand{\ZZ}{\mathbf{Z}}
\newcommand{\CP}{\mathbf{CP}}
\newcommand{\bst}{\boldsymbol{t}}
\newcommand{\calL}{\mathcal{L}}
\newcommand{\calM}{\mathcal{M}}
\newcommand{\calS}{\mathcal{S}}
\newcommand{\calZ}{\mathcal{Z}}
\newtheorem{theorem}{Theorem}[section]
\newtheorem{proposition}[theorem]{Proposition}
\theoremstyle{definition}
\theoremstyle{remark} \newtheorem{remark}[theorem]{Remark}
\newcommand{\pa}{\partial}
\DeclareMathOperator{\ad}{ad}
\renewcommand{\setminus}{\smallsetminus}
\begin{document}

%% title page %%%%%%%%%%%%%%%%%%%%%%%%%%%%%%%%%%%%%%%%%%%%%%
\title{Non-degenerate solutions\\
of universal Whitham hierarchy}
\author{Kanehisa Takasaki\\
Graduate School of Human and Environmental Studies,\\
Kyoto University,\\
Yoshida, Sakyo, Kyoto, 606-8501, Japan
\\ \\
Takashi Takebe\\
Faculty of Mathematics,\\
State University -- Higher School of Economics,\\
Vavilova Street, 7, Moscow, 117312, Russia
\\ \\
Lee Peng Teo\\
Department of Applied Mathematics, Faculty of Engineering, \\
University of Nottingham Malaysia Campus, Jalan Broga,\\ 
43500, Semenyih, Selangor Darul Ehsan, Malaysia.   }
\date{}
%\date{27 March 2010; ver. 0.92}
%\date{25 March 2010; ver. 0.91}
%\date{09 March 2010; ver. 0.90}
%\date{04 March 2010; ver. 0.8}
%\date{03 March 2010; ver 0.7}
%\date{03 March 2010; ver 0.6}
%\date{02 March 2010; ver. 0.5?}
%\date{22 February 2010; ver. 0.4}
%\date{21 February 2010; ver. 0.3}
%\date{19 February 2010; ver. 0.2}
%\date{18 February 2010; ver. 0.1}
%\date{13 February 2010; ver. 0.0}
%\date{2010/02/08}
\maketitle

\begin{abstract}
The notion of non-degenerate solutions for the dispersionless Toda 
hierarchy is generalized to the universal Whitham hierarchy of genus 
zero with $M+1$ marked points.  These solutions are characterized by a 
Riemann-Hilbert problem (generalized string equations) with respect to 
two-dimensional canonical transformations, and may be thought of as a 
kind of general solutions of the hierarchy.  The Riemann-Hilbert problem 
contains $M$ arbitrary functions $H_a(z_0,z_a)$, $a = 1,\ldots,M$, which 
play the role of generating functions of two-dimensional canonical 
transformations.  The solution of the Riemann-Hilbert problem is 
described by period maps on the space of $(M+1)$-tuples $(z_\alpha(p) : 
\alpha = 0,1,\ldots,M)$ of conformal maps from $M$ disks of the Riemann 
sphere and their complements to the Riemann sphere.  The period maps are 
defined by an infinite number of contour integrals that generalize the 
notion of harmonic moments.  The $F$-function (free energy) of these 
solutions is also shown to have a contour integral representation. 
\end{abstract}

\newpage %%%%%%%%%%%%%%%%%%%%%%%%%%%%%%%%%%%%%%%%%%%%%%%%%%%

\section{Introduction}

The universal Whitham hierarchy is a unified framework for various
dispersionless integrable systems and Whitham modulation equations
\cite{Krichever94}.  In particular, the hierarchy of genus zero, which
is the subject of this paper, is a natural generalization of the
dispersionless KP and Toda hierarchies \cite{TT95}.  Therefore it is
natural to ask to what extent the rich contents of the dispersionless KP
and Toda hierarchies can be generalized to the hierarchy of genus zero.

This issue has been sought for since the turn of the century
when the study of dispersionlss integrable systems entered
a new stage.  As regards the problem of special solutions,
the classical ``hodograph method'' has been generalized
\cite{GMMA03,TT08} to obtain a class of solutions
including Krichever's ``algebraic orbits'' \cite{Krichever94}.
Another class of special solutions (also related to algebraic orbits)
have been studied in the context of the Virasoro constraints
\cite{MMMA05,MAMM05} and the large-$N$ limit of multiple
orthogonal polynomials \cite{MAM08}.  It should be stressed
that the structure of infinitesimal additional symmetries
(including the Virasoro symmetries) was fully elucidated
by the work of the Madrid group \cite{MMMA05,MAMM05}.
As in the case of the dispersionless KP and Toda hierarchies
\cite{TT95}, those symmetries are derived from a ``nonlinear''
Riemann-Hilbert problem (or an equivalent $\bar{\rd}$ problem
\cite{KMA01a,KMA01b}) with respect to two-dimensional
canonical transformations.
As regards the Riemann-Hilbert problem itself, however,
no effective method for finding an explicit form of solutions
is known apart from very special cases;  one has to resort
to a genuine existence theorem (though it is enough for
deriving the infinitesimal symmetries).   Moreover,
the $F$-function (free energy), also known as
the dispersionless (logarithm of) tau function,
has to be treated separately in this approach.

Recently, one of the present authors reformulated the Riemann-Hilbert
problem for the dispersionless Toda hierarchy in a slightly different
form, and introduced the notion of ``non-degenerate solutions'' for
which a more effective description is available \cite{Teo0906}.  A
central idea of this result stems from the work of Wiegmann and Zabrodin
\cite{WZ00} on an integrable structure of univalent conformal maps in
Riemann's mapping theorem.  They used the harmonic moments of the domain
to interpret the conformal maps as a special solution of the
dispersionless Toda hierarchy.  This result can be generalized to pairs
of conformal maps \cite{Teo0905}.  The harmonic moments are redefined
therein as contour integrals that include the conformal map (or the pair
of conformal maps), and shown to give a system of local coordinates on
the space of pairs of conformal maps.  Actually, this amounts to solving
a Riemann-Hilbert problem (or ``string equations'') in a special case
\cite{Takasaki95}.  The method of harmonic moments were generalized
later by Zabrodin to a larger class of solutions of the dispersionless
Toda hierarchy \cite{Zabrodin01}.  The notion of non-degenerate
solutions is a rigorous reformulation of those solutions, which thereby
turn out to be a kind of general (or generic) solutions of the
dispersionless Toda hierarchy.  The goal of this paper is to generalize
these results \cite{Teo0906} to the universal Whitham hierarchy of genus
zero.

Let us briefly recall the notion of non-degenerate solutions
of the dispersionless Toda hierarchy.  Those solutions
are characterized by a Riemann-Hilbert problem
of the following form:   Let $H(z,\tilde{z})$
be a holomorphic function of two variables
defined in a suitable domain (not specified here),
and $H_z(z,\tilde{z})$ and $H_{\tilde{z}}(z,\tilde{z})$
denote the derivatives
$H_z(z,\tilde{z}) = \rd H(z,\tilde{z})/\rd z$,
$H_{\tilde{z}}(z,\tilde{z}) = \rd H(z,\tilde{z})/\rd\tilde{z}$.
Moreover, suppose that $H(z,\tilde{z})$ satisfies
the non-degeneracy condition
\begin{eqnarray*}
  H_{z\tilde{z}}(z,\tilde{z}) \not= 0.
\end{eqnarray*}
The problem is to find four functions
$\calL(P),\calM(P),\tilde{\calL}(P),\tilde{\calM}(P)$
of a complex variable $P$ with the following properties:
\begin{itemize}
\item[(i)] $\calL(P)$ and $\calM(P)$ are holomorphic functions
in the punctured disk $1 < |P| < \infty$,
$\calL(P)$ being univalent therein,
and have a Laurent expansion of the form
\begin{eqnarray*}
\begin{aligned}
  \calL(P) &= P + \sum_{n=1}^\infty u_nP^{-n+1},\\
  \calM(P) &= \sum_{n=1}^\infty nt_n\calL(P)^n
    + t_0 + \sum_{n=1}^\infty v_n\calL(P)^n.
\end{aligned}
\end{eqnarray*}
\item[(ii)] $\tilde{\calL}(P)^{-1}$ and $\tilde{\calM}(P)$
are holomorphic functions in the punctured disk $0 < |P| < 1$,
$\tilde{\calL}(P)$ being univalent therein, and
have a Laurent expansion of the form
\begin{eqnarray*}
\begin{aligned}
  \tilde{\calL}(P)^{-1} &= \sum_{n=0}^\infty \tilde{u}_nP^{n-1}
  \quad (\tilde{u}_0 \not= 0),\\
  \tilde{\calM}(P) &= - \sum_{n=1}^\infty nt_{-n}\tilde{\calL}(P)^{-n}
    + t_0 - \sum_{n=1}^\infty v_{-n}\tilde{\calL}(P)^n.
\end{aligned}
\end{eqnarray*}
\item[(iii)] These functions can be analytically continued to a
neighborhood of the unit circle $|P| = 1$ and satisfy the functional
equations (generalized string equations)
\begin{eqnarray}
  \calM(P) = \calL(P) H_{z}(\calL(P),\tilde{\calL}(P)),\quad
  \tilde{\calM}(P) = - \tilde{\calL}(P)H_{\tilde{z}}(\calL(P),\tilde{\calL}(P))
\label{dToda-RH1}
\end{eqnarray}
therein.
\end{itemize}
If the equations
\begin{eqnarray*}
  w = zH_z(z,\tilde{z}),\quad
  \tilde{w} = - \tilde{z}H_{\tilde{z}}(z,\tilde{z})
\end{eqnarray*}
can be solved for $\tilde{z}$, the map
$(z,w) \mapsto (f(z,w),g(z,w)) = (\tilde{z},\tilde{w})$
becomes a two-dimensional canonical transformation
(or symplectic map) with respect to the symplectic form
\begin{eqnarray*}
  \frac{dz\wedge dw}{z}
  = \frac{d\tilde{z}\wedge d\tilde{w}}{\tilde{z}},
\end{eqnarray*}
the function $H(z,\tilde{z})$ being its ``generating function''.
It is well known that this is a normal form
of canonical transformations in a ``general position''
of the set of all canonical transformations.
(\ref{dToda-RH1}) can be thus rewritten as
\begin{eqnarray}
  \tilde{\calL}(P) = f(\calL(P),\calM(P)),\quad
  \tilde{\calM}(P) = g(\calL(P),\calM(P)).
\label{dToda-RH2}
\end{eqnarray}
This is a Riemann-Hilbert problem of the standard form
that characterizes the Lax and Orlov-Schulman functions
of the dispersionless Toda hierarchy \cite{TT95}.
The aforementioned remark on canonical transformations
with generating functions imply that
the non-degenerate solutions are indeed
general solutions of the dispersionless Toda hierarchy.

An advantage of (\ref{dToda-RH1}) over (\ref{dToda-RH2})
is that it is ``solvable'' in the following sense.
The generalized string equations (\ref{dToda-RH1})
can be converted to the infinite system of equations
\begin{eqnarray}
\begin{aligned}
  nt_n &= \frac{1}{2\pi i}\oint_{|P|=1}
        H_z(\calL(P),\tilde{\calL}(P))\calL(P)^{-n}d\calL(P),\\
  nt_{-n} &= \frac{1}{2\pi i}\oint_{|P|=1}
        H_{\tilde{z}}(\calL(P),\tilde{\calL}(P))\tilde{\calL}(P)^n
        d\tilde{\calL}(P),\\
  t_0 &= \frac{1}{2\pi i}\oint_{|P|=1}
        H_z(\calL(P),\tilde{\calL}(P))d\calL(P)\\
      &= - \frac{1}{2\pi i}\oint_{|P|=1}
        H_{\tilde{z}}(\calL(P),\tilde{\calL}(P))d\tilde{\calL}(P),
\end{aligned}
\label{dToda-tn-oint}
\end{eqnarray}
and
\begin{eqnarray}
\begin{aligned}
  v_n &= \frac{1}{2\pi i}\oint_{|P|=1}
        H_z(\calL(P),\tilde{\calL}(P))\calL(P)^nd\calL(P),\\
  v_{-n} &= \frac{1}{2\pi i}\oint_{|P|=1}
        H_{\tilde{z}}(\calL(P),\tilde{\calL}(P))\tilde{\calL}(P)^{-n}
        d\tilde{\calL}(P)
\end{aligned}
\label{dToda-vn-oint}
\end{eqnarray}
for $n = 1,2,\ldots$.  Note that the contour integrals
are analogues of harmonic moments; in the terminology
of geometry, they are a kind of ``period integrals''.
A fundamental fact \cite{Teo0906} is that the first set
(\ref{dToda-tn-oint}) of these period integrals
give a system of local coordinates on the space
of the pairs $(\calL,\tilde{\calL})$ of conformal maps.
This implies that the ``period map''
$(\calL,\tilde{\calL}) \mapsto (t_n : n \in \ZZ)$
is (locally) invertible, and the inverse map
and the second set (\ref{dToda-vn-oint}) of period integrals
give a (unique) solution of the Riemann-Hilbert problem.
Remarkably, the $F$-function, too, turns out to have
a contour integral representation \cite{Teo0906}.

In the language of the universal Whitham hierarchy
of genus zero, the dispersionless Toda hierarchy
amounts to the case with two ``marked points''.
The general $(M+1)$-point hierarchy is formulated
by $M+1$ pairs $(z_\alpha(p),\zeta_\alpha(p))$,
$\alpha = 0,1,\ldots,M$, of Lax and Orlov-Schulman
functions.  In the two-point ($M = 1$) case,
these functions are connected with
the Lax and Orlov-Schulman functions
of the dispersionless Toda hierarchy as
\begin{eqnarray*}
\begin{aligned}
  &z_0(p) = \calL(P),& \quad
  &z_1(p) = \tilde{\calL}(P)^{-1},& \\
  &\zeta_0(p) = \calM(P)\calL(P)^{-1},&\quad
  &\zeta_1(p) = - \tilde{\calM}(P)\tilde{\calL}(P),&
\end{aligned}
\end{eqnarray*}
where the coordinates $p$ and $P$ of the Riemann sphere
in both hierarchies are related as
\begin{eqnarray*}
  p = P + u_1.
\end{eqnarray*}
Thus the marked points $P = \infty,0$ of
the dispersionless Toda hierarchy correspond
to the marked points $p = \infty,u_1$ of
the universal Whitham hierarchy.
Bearing this interpretation of the dispersionless Toda hierarchy
in mind, we turn to the $M+1$-point case.

This paper is organized as follows.  In Section 2, we review the
fundamental structure of the universal Whitham hierarchy of genus zero.
Building blocks of the hierarchy, such as the Lax and Orlov-Schulmann
functions, the $S$-functions, the $F$-function and the generalized
Grunsky coefficients, are introduced in detail.  For technical reasons,
the definition of the $F$-function in our previous work \cite{TT06,TT08}
is slightly modified here, though this is not a serious problem.  In
Section 3, we formulate the Riemann-Hilbert problem that defines
non-degenerate solutions.  The basic setup is parallel to the
formulation by the Madrid group \cite{MMMA05,MAMM05}.  Our generalized
string equations have $M$ arbitrary functions $H_a(z_0,z_a)$, $a =
1,\ldots,M$, as functional data.  As in the case of the dispersionless
Toda hierarchy, these functions play the role of generating functions of
two-dimensional canonical transformations.  In Section 4, we generalize
the period integrals (\ref{dToda-tn-oint}) and (\ref{dToda-vn-oint}) to
the space $\calZ$ of $(M+1)$-tuples $(z_\alpha(p) : \alpha =
0,1,\ldots,M)$ of conformal maps, and show that a half of them give a
system of local coordinates on $\calZ$.  This justifies the definition
of non-degenerate solutions.  Section 5 is an intermediate step towards
the construction of the $F$-function.  We present here a contour
integral representation of the potentials $\phi_a, a=1,\ldots,M,$ that
show up in the Laurent expansions of the $S$-functions.  These
$\phi$-functions are used in Section 6 for the construction of the
$F$-function.  As in the case of the dispersionless Toda hierarchy, we
define a set of auxiliary functions $J_{a,1}(z_0,z_a),J_{a,2}(z_0,z_a)$,
$a = 1,\ldots,M$.  These functions are used to express the $F$-function
in terms of contour integrals.  In Section 7, we illustrate the
construction of non-degenerate solutions in a few special cases that
amount to the examples studied for the dispersionless Toda hierarchy
\cite{Teo0906}.

\paragraph*{\em Acknowledgements}
This work is partly supported by Grant-in-Aid for Scientific Research
No.\ 19104002, \ 19540179 and No.\ 21540218 
from the Japan Society for the Promotionof Science.  
TT is partly supported by the grant of the State University
-- Higher School of Economics, Russia, for the Individual Research
Project 09-01-0047 (2009).

\section{Building blocks of universal Whitham hierarchy}

In this section we review essential facts on the universal Whitham
hierarchy of genus zero necessary for our later discussion, following
our previous work \cite{TT06,TT08}\footnote{The authors of \cite{TT06}
sincerely apologize numerous typographical errors in the proofs in it,
but the statements there are correct. The only differences from
\cite{TT06} are the definition of the $F$-function \eqref{eq2_22_5} and,
consequently, changes of several signatures in, e.g., \eqref{Grunsky}.}.
The notations are mostly the same as %our previous work
\cite{TT06,TT08}, except that, after the notation
of the recent work \cite{MAM08} of the Madrid group,
Greek indices $\alpha,\beta,\ldots$ range over $0,1,\ldots,M$
and Latin indices $a,b,\ldots$ over $1,\ldots,M$.

\paragraph{Lax functions}
The Lax functions $z_\alpha(p)$, $\alpha = 0,1,\ldots,M$,
are functions with Laurent expansions of the form
\begin{eqnarray}
\begin{aligned}
  z_0(p) &= p + \sum_{j=2}^\infty u_{0j}p^{-j+1},\\
  z_a(p) &= \frac{r_a}{p-q_a} + \sum_{j=1}^\infty u_{aj}(p-q_a)^{j-1}
  \quad (a = 1,\ldots,M),
\end{aligned}
\label{z-expansion}
\end{eqnarray}
in a neighborhood of $p = \infty$ and $p = q_a$, respectively.
The coefficients $u_{\alpha j}$
($r_a = u_{a0}$) and the centers $q_a$ are dynamical variables.
To consider a Riemann-Hilbert problem \cite{MMMA05,MAMM05},
we choose a set of disjoint positively oriented simple closed curves
$C_1,\ldots,C_M$ that encircle $q_1,\ldots,q_M$ counterclockwise,
and assume that the Laurent expansion of $z_a(p)$ converges 
in the inside $D_a$ of $C_a$ and that the Laurent expansion 
of $z_0(p)$ converges in a neighborhood of $p = \infty$ 
and can be analytically continued, as a holomorphic function, 
to the outside $\CC\setminus(D_1\cup\cdots\cup D_M)$ of $D_a$'s.

\paragraph{Lax equations}
The hierarchy has $M+1$ series of time evolutions
with time variables $t_{0n}$, $n = 1,2,\ldots$
and $t_{an}$, $a = 1,\ldots,M$, $n = 0,1,2,\ldots$.
The time evolutions of the Lax functions
are defined by the Lax equations
\begin{eqnarray}
  \rd_{\alpha n}z_\beta(p)
  = \{\Omega_{\alpha n}(p),\, z_\beta(p)\}, \quad
  \rd_{\alpha n} = \rd/\rd t_{\alpha n},
\end{eqnarray}
with respect to the Poisson bracket
\begin{eqnarray}
  \{f,g\}
  = \frac{\rd f}{\rd p}\frac{\rd g}{\rd t_{01}}
  - \frac{\rd f}{\rd t_{01}}\frac{\rd g}{\rd p}.
\end{eqnarray}
The Hamiltonians $\Omega_{\alpha n}(p)$ are defined as
\begin{eqnarray}
\begin{aligned}
  &\Omega_{0n}(p) = \bigl(z_0(p)^n\bigr)_{(0,\ge 0)}, \quad
  \Omega_{a n}(p) = \bigl(z_a(p)^n\bigr)_{(a,<0)}
  \quad (n = 1,2,\ldots), \\
  &\Omega_{a 0}(p) = - \log(p - q_a),
\end{aligned}
\end{eqnarray}
where $(\quad)_{(0,\ge 0)}$ denotes the projection
to non-negative powers of $p$, and $(\quad)_{(a,<0)}$
the projection to negative powers of $p - q_a$.
In other words,
\begin{eqnarray}
\begin{aligned}
  z_0(p)^n &= \Omega_{0n}(p) + O(p^{-1})  \quad (p \to \infty), \\
  z_a(p)^n &= \Omega_{an}(p) + O(1)  \quad (p \to q_a)
\end{aligned}
\end{eqnarray}
for $n \ge 1$.  $\Omega_{\alpha n}(p)$ satisfies
the dispersionless Zakharov-Shabat equations
\begin{eqnarray}
  \rd_{\beta m}\Omega_{\alpha n}(p)
  - \rd_{\alpha n}\Omega_{\beta m}(p)
  + \{\Omega_{\alpha n}(p),\,\Omega_{\beta m}(p)\}
  = 0.
\end{eqnarray}

As pointed out in \cite{MMMA05}, the dressing functions of the universal
Whitham hierarchy have the following form:
\begin{equation}
\label{eq1_21_3}
\begin{split}
\varphi_0(p)=& \sum_{j=1}^{\infty}\varphi_{0,j}p^{-j},
\qquad
\varphi_{a}(p)=\sum_{j=0}^{\infty}\varphi_{a,j}(p-q_a^{(0)})^j,
\end{split}
\end{equation}
\begin{equation*}\begin{split}
z_0(p)=&e^{\text{ad}\varphi_0(p)}p,\qquad
z_{a}(p)=e^{\text{ad}\varphi_{a}(p)}(p-q_a^{(0)})^{-1}.
\end{split}
\end{equation*}
%Here we have set the non-dynamical variable $q_{a}^{(0)}$ in
%\cite{MMMA05} to be zero (since it is arbitrary). Now we use the same
%method as in \cite{TT95} to show that
The following is due to \cite{MMMA05}, Theorem3.
\begin{proposition}
 \label{p1}
 If $(z_{\alpha}(p)\,:\,\alpha=0,1,\ldots,M)$ is a solution of the
 universal Whitham hierarchy, then there exists dressing functions
 $\varphi_{\alpha}(p)$ of the form \eqref{eq1_21_3}, such that
\begin{equation}
\label{eq1_21_4}
\begin{split}
 z_0(p)=&e^{\text{ad}\varphi_0(p)}p,\qquad
 z_{a}(p)=e^{\text{ad}\varphi_{a}(p)}(p-q_a^{(0)})^{-1},
\end{split}\end{equation}
and
\begin{equation}
   \nabla_{\alpha n} \varphi_{\beta}=
   \tilde{\Omega}_{\alpha n,\beta},
\end{equation}
where
\begin{equation}
   \tilde{\Omega}_{\alpha n,\beta}
   =\begin{cases}
    \Omega_{\alpha n}-\delta_{\alpha 0}\delta_{n1}z_{\beta}(p)^{-1}\quad
    &( \alpha\neq \beta\ \text{and}\ (\beta\neq 0\ \text{or}\ n\neq 0) ),
    \\
    \Omega_{\alpha 0}+\log z_0(p)\quad
    &(\alpha\neq 0\ \text{and}\ \beta=0\ \text{and}\ n=0),
    \\
    \Omega_{\alpha n}-z_{\alpha}(p)^n
    &(\alpha=\beta\ \text{and}\ n\neq 0),
    \\
    \Omega_{\alpha 0}-\log z_{\alpha }(p)
    &(\alpha=\beta\neq 0\ \text{and}\ n=0),
   \end{cases}
\label{diff-eq:dressing}
\end{equation}
and
$\nabla_{\alpha n}$ is the right logarithmic derivative (cf.\
\cite{MMMA05} Appendix A, \cite{TT95} Appendix A) defined by
\begin{equation}
    \nabla_{\alpha n} \psi
    =
    \sum_{n=0}^\infty
    \frac{(\ad \psi)^n}{(n+1)!} \partial_{\alpha n} \psi.
\label{def:delta}
\end{equation}
\end{proposition}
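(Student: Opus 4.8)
The plan is to prove the two assertions of the proposition in turn: first the existence of dressing functions $\varphi_\alpha$ of the form \eqref{eq1_21_3} that realize each Lax function as the image of a ``bare'' Lax function under the canonical transformation $e^{\ad\varphi_\alpha}$, and then the evolution equations $\nabla_{\alpha n}\varphi_\beta = \tilde{\Omega}_{\alpha n,\beta}$. Throughout, $\ad\varphi = \{\varphi,\cdot\}$ is the Hamiltonian vector field attached to $\varphi$; since the Poisson bracket obeys the Leibniz rule, $\ad\varphi$ is a derivation of the pointwise product, so $e^{\ad\varphi}$ is a \emph{multiplicative} Poisson automorphism and in particular commutes with taking powers and logarithms of its arguments.

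For the existence, I would regard $p \mapsto z_0(p)$ and $(p-q_a^{(0)})^{-1} \mapsto z_a(p)$ as formal canonical transformations of the $(p,t_{01})$-phase space. Since $z_0(p) = p + O(p^{-1})$ is a near-identity transformation at $p=\infty$ and $z_a(p) = r_a/(p-q_a)+\cdots$ has the prescribed behaviour at $p = q_a$, each is generated by a unique Hamiltonian $\varphi_\alpha$ lying in the class \eqref{eq1_21_3}: one solves $z_0 = e^{\ad\varphi_0}p$ and $z_a = e^{\ad\varphi_a}(p-q_a^{(0)})^{-1}$ recursively for the Laurent coefficients $\varphi_{\alpha,j}$, the leading relation being fixed by $\{\varphi_0,p\} = -\rd_{t_{01}}\varphi_0$ and its analogue at $q_a$. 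Because this lowest-order relation already involves a $t_{01}$-derivative, solvability at each order requires integrating in $t_{01}$, and the fact that the $\varphi_\alpha$ stay in the prescribed class uses precisely that the coefficients $u_{\alpha j}$ and centers $q_a$ form a solution of the hierarchy.

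For the evolution equations, differentiate the dressing relation $z_\beta = e^{\ad\varphi_\beta}(z_\beta^{\mathrm{bare}})$ with respect to $t_{\alpha n}$. Because the bare Lax function is time-independent, the standard formula for the derivative of the exponential map, together with the definition \eqref{def:delta} of the right logarithmic derivative, gives $(\rd_{\alpha n}e^{\ad\varphi_\beta})\,e^{-\ad\varphi_\beta} = \ad(\nabla_{\alpha n}\varphi_\beta)$ and hence
\begin{equation*}
  \rd_{\alpha n} z_\beta = \{\nabla_{\alpha n}\varphi_\beta,\, z_\beta\}.
\end{equation*}
Comparing with the Lax equations $\rd_{\alpha n}z_\beta = \{\Omega_{\alpha n}, z_\beta\}$ yields $\{\nabla_{\alpha n}\varphi_\beta - \Omega_{\alpha n},\, z_\beta\} = 0$, so that $\nabla_{\alpha n}\varphi_\beta - \Omega_{\alpha n}$ is a function of $z_\beta$ alone. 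This difference is then pinned down by the requirement that $\nabla_{\alpha n}\varphi_\beta$ belong to the same function class as $\varphi_\beta$, namely $O(p^{-1})$ at $p=\infty$ when $\beta=0$ and holomorphic at $p=q_a^{(0)}$ when $\beta=a$. Writing the correction as $-e^{\ad\varphi_\beta}$ applied to the singular part of the bare Hamiltonian of the $t_{\alpha n}$-flow, and using that $e^{\ad\varphi_\beta}$ intertwines powers and logarithms, produces the four cases of \eqref{diff-eq:dressing}: $-z_\beta^n$ in the diagonal case $\alpha=\beta$, $n\neq 0$; $-\log z_\alpha$ in the diagonal logarithmic case; $+\log z_0$ for the logarithmic flow acting at $p=\infty$; and $-\delta_{\alpha 0}\delta_{n1}z_\beta^{-1}$ in the remaining generic case.

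The main obstacle is this case-by-case determination. The logarithmic corrections are explained by cancelling the $\log$-growth of $\Omega_{a0} = -\log(p-q_a)$ at the marked point (at $\infty$ for $\beta=0$, at $q_a$ for $\beta=a$), but the non-obvious term $-\delta_{\alpha 0}\delta_{n1}z_\beta^{-1}$ requires care: it reflects the special role of $\Omega_{01}=p$, which generates the translation $\rd_{t_{01}}$ and whose action must be corrected for the mismatch between the fixed centers $q_a^{(0)}$ used in the dressing and the dynamical centers $q_a$. I would finish by checking that each candidate $\tilde{\Omega}_{\alpha n,\beta}$ indeed lies in the function class of $\varphi_\beta$, and that the resulting overdetermined system \eqref{diff-eq:dressing} is consistent, the relevant compatibility of the cross-derivatives (in the appropriate right-logarithmic-derivative form) being a consequence of the dispersionless Zakharov--Shabat equations satisfied by the $\Omega_{\alpha n}$.
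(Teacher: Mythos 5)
The first thing to note is that the paper contains no proof of Proposition \ref{p1} at all: it is imported verbatim from \cite{MMMA05} (Theorem 3), so your attempt has to be measured against that standard dressing-method argument (see also \cite{TT95}, Appendix A, for the dispersionless Toda analogue). Your skeleton does coincide with it: construct $\varphi_\beta$ order by order from \eqref{eq1_21_4}, integrating in $t_{01}$ at each step since $\{\varphi_0,p\}=-\rd_{t_{01}}\varphi_0$; differentiate the dressing relation using $(\rd_{\alpha n}e^{\ad\varphi_\beta})e^{-\ad\varphi_\beta}=\ad(\nabla_{\alpha n}\varphi_\beta)$ to get $\rd_{\alpha n}z_\beta=\{\nabla_{\alpha n}\varphi_\beta,z_\beta\}$; compare with the Lax equations to conclude that $\nabla_{\alpha n}\varphi_\beta-\Omega_{\alpha n}$ is a function of $z_\beta$ alone. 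Up to this point the argument is sound. The genuine gap is the next step: your claim that this discrepancy ``is pinned down by the requirement that $\nabla_{\alpha n}\varphi_\beta$ belong to the same function class as $\varphi_\beta$'' is false. The class constraint fixes only the non-decaying (resp.\ singular) part of the discrepancy: for $\beta=0$ any tail $\sum_{j\ge1}h_j(\boldsymbol{t})\,z_0(p)^{-j}$ is $O(p^{-1})$ at $p=\infty$, and for $\beta=a$ any power series in $z_a(p)^{-1}$ is a nonnegative power series in $p-q_a^{(0)}$; indeed $z_a^{-1}=e^{\ad\varphi_a}(p-q_a^{(0)})$ itself lies in the class \eqref{eq1_21_3}, which is precisely why the term $-\delta_{\alpha 0}\delta_{n1}z_\beta^{-1}$ in \eqref{diff-eq:dressing} could never be read off from class considerations in the way you propose. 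Moreover, your recursive construction determines $\varphi_\beta$ only up to integration ``constants'' in $t_{01}$ (arbitrary functions of the remaining times), the discrepancy genuinely depends on that choice, and for a generic choice the evolution equations \eqref{diff-eq:dressing} simply fail.

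What is missing is the gauge-fixing step, which is the heart of the proof in \cite{MMMA05}: for a preliminary dressing $\varphi_\beta$, set $\sigma_{\alpha n}=e^{-\ad\varphi_\beta}\bigl(\nabla_{\alpha n}\varphi_\beta-\tilde{\Omega}_{\alpha n,\beta}\bigr)$; each $\sigma_{\alpha n}$ Poisson-commutes with the undressed function ($p$, resp.\ $(p-q_a^{(0)})^{-1}$), hence is a function of it with no explicit $t_{01}$-dependence; the dispersionless Zakharov--Shabat equations, combined with the standard identities for the right logarithmic derivative \eqref{def:delta}, show that the family $\{\sigma_{\alpha n}\}$ satisfies the zero-curvature (closedness) condition, so it can be simultaneously integrated to a single $\psi_\beta$, a function of the undressed variable and the times, with $\nabla_{\alpha n}\psi_\beta$ matching the discrepancies; replacing $e^{\ad\varphi_\beta}$ by $e^{\ad\varphi_\beta}e^{\ad\psi_\beta}$ then leaves \eqref{eq1_21_4} intact (since $\psi_\beta$ Poisson-commutes with the undressed function) and removes the discrepancy. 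You do invoke Zakharov--Shabat, but only for the consistency of \eqref{diff-eq:dressing} as an overdetermined system; consistency alone does not make your particular $\varphi_\beta$ satisfy it --- the residual gauge freedom must actually be used. Two secondary inaccuracies: the existence of \emph{some} dressing satisfying \eqref{eq1_21_4} at fixed times needs nothing from the hierarchy (any tuple of the form \eqref{z-expansion} can be dressed order by order), contrary to your remark --- the hierarchy enters only through the Lax equations in the evolution and gauge-fixing steps; and the verification that each $\tilde{\Omega}_{\alpha n,\beta}$ lies in the class of $\varphi_\beta$, which you defer, is a necessary part of the argument even after the normalization is carried out.
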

In the above, $q_a^{(0)}, a=1,\ldots, M,$ are arbitrary non-dynamical
variables. Without loss of generality, we set $q_a^{(0)}=0$ henceforth.

\paragraph{Orlov-Schulman functions}
The Orlov-Schulman functions $\zeta_\alpha(p)$, $\alpha = 0,1,\ldots,M$
are Laurent series of the form
\begin{eqnarray}
\begin{aligned}
  \zeta_0(p)
  &= \sum_{n=1}^\infty nt_{0n}z_0(p)^{n-1}
    + \frac{t_{00}}{z_0(p)}
    + \sum_{n=1}^\infty z_0(p)^{-n-1}v_{0n}, \\
  \zeta_a(p)
  &= \sum_{n=1}^\infty nt_{an}z_a(p)^{n-1}
    + \frac{t_{a0}}{z_a(p)}
    + \sum_{n=1}^\infty z_a(p)^{-n-1}v_{an},
\end{aligned}
\label{zeta-expansion}
\end{eqnarray}
where
\begin{eqnarray*}
  t_{00} = - \sum_{a=1}^M t_{a0}.
\end{eqnarray*}
They satisfy the Lax equations
\begin{eqnarray}
  \rd_{\alpha n}\zeta_\beta(p)
  = \{\Omega_{\alpha n}(p),\, \zeta_\beta(p)\}
\label{lax:zeta}
\end{eqnarray}
and the canonical Poisson commutation relation
\begin{eqnarray}
  \{z_\alpha(p),\zeta_\alpha(p)\} = 1.
\label{canonical(z,zeta)}
\end{eqnarray}
In terms of the dressing functions, $\zeta_{\alpha}$ are given by
\begin{equation*}
\begin{split}
\zeta_0(p)
=&e^{\text{ad}\varphi_0(p)}
\left(
   \sum_{n=1}^{\infty}n t_{0n} p^{n-1}+\frac{t_{00}}{p}
\right),
\\
\zeta_{a}(p)
=&e^{\text{ad}\varphi_a(p)}
\left(
   \sum_{n=1}^{\infty}n t_{a n} p^{-n+1}+ t_{a 0}p-t_{01}p^{2}
\right).
\end{split}
\end{equation*}
The canonical Poisson commutation relation \eqref{canonical(z,zeta)} is
a direct consequence of the definition and the Lax equations
\eqref{lax:zeta} follow from \eqref{diff-eq:dressing}.

\paragraph*{$S$-functions}
The $S$-functions $\calS_\alpha(p)$, $\alpha = 0,1,\ldots,M$,
are defined as potentials of 1-forms as
\begin{eqnarray}
  d\calS_\alpha(p) = \theta + \zeta_\alpha(p)dz_\alpha(p),
\end{eqnarray}
where
\begin{eqnarray*}
  \theta = \sum_{n=1}^\infty \Omega_{0n}(p)dt_{0n}
    + \sum_{a=1}^M \sum_{n=0}^\infty
      \Omega_{a n}(p)dt_{a n}.
\end{eqnarray*}
They have Laurent expansions of the form
\begin{eqnarray}
\begin{aligned}
  \calS_0(p) &= \sum_{n=1}^\infty t_{0n}z_0(p)^n
    + t_{00}\log z_0(p)
    - \sum_{n=1}^\infty \frac{z_0(p)^{-n}}{n}v_{0n}, \\
  \calS_a(p) &= \sum_{n=1}^\infty t_{an}z_a(p)^n
    + t_{a0}\log z_a(p) + \phi_a
    - \sum_{n=1}^\infty \frac{z_a(p)^{-n}}{n}v_{an}.
\end{aligned}
\end{eqnarray}

\paragraph*{Implications of $S$-functions}
Let us define $S_\alpha(z)$, $\alpha = 0,1,\ldots,M$, as
\begin{eqnarray}
\begin{aligned}
  S_0(z) &= \sum_{n=1}^\infty t_{0n}z^n + t_{00}\log z
    - \sum_{n=1}^\infty \frac{z^{-n}}{n}v_{0n}, \\
  S_a(z) &= \sum_{n=1}^\infty t_{an}z^n
    + t_{a 0}\log z + \phi_a
    - \sum_{n=1}^\infty \frac{z^{-n}}{n}v_{an}.
\end{aligned}
\end{eqnarray}
$\calS_\alpha(p)$ can be thereby expressed as
\begin{eqnarray*}
  \calS_0(p) = S_0(z_0(p)), \quad
  \calS_a(p) = S_a(z_a(p)).
\end{eqnarray*}
%%%% replaced from here (ver 0.3)%%%%%%%%%%%%%%%%%%%%%%%%%%%%%%%%%%%%%%%
Moreover, the defining equations of $\calS_\alpha(p)$
imply the equations
\begin{eqnarray*}
  \zeta_\alpha(p) = S_\alpha'(z_\alpha(p)),
\end{eqnarray*}
where the prime denotes the derivative with respect to $z$,
and
\begin{eqnarray*}
  \Omega_{\alpha n}(p)
  = \left.\rd_{\alpha n}S_\beta(z)\right|_{z=z_\beta(p)},\quad
  \beta = 0,1,\ldots,M.
\end{eqnarray*}
%%%% replaced up to here (ver 0.3)%%%%%%%%%%%%%%%%%%%%%%%%%%%%%%%%%%%%%%
%%%% replaced from here (ver 0.1) %%%%%%%%%%%%%%%%%%%%%%%%%%%%%%%%%%%%%%
The former is just a restatement of the Laurent expansion
of $\zeta_\alpha(p)$.  The latter implies that
$\Omega_{\alpha n}(p)$ can be written in several
different forms as
\begin{eqnarray}
  \Omega_{0n}(p)
= \begin{cases}
  \displaystyle
  z_0(p)^n - \sum_{m=1}^\infty \frac{z_0(p)^{-m}}{m}\rd_{0n}v_{0m},\\
  \displaystyle
  \rd_{0n}\phi_b
  - \sum_{m=1}^\infty \frac{z_b(p)^{-m}}{m}\rd_{0n}v_{bm},\quad
  b = 1,\ldots,M
  \end{cases}
\\
  \Omega_{an}(p)
= \begin{cases}
  \displaystyle
  - \sum_{m=1}^\infty \frac{z_0(p)^{-m}}{m}\rd_{an}v_{0m},\\
  \displaystyle
  \delta_{ab}z_b(p)^n + \rd_{an}\phi_b
  - \sum_{m=1}^\infty \frac{z_b(p)^{-m}}{m}\rd_{an}v_{bm},\quad
  b = 1,\ldots,M
  \end{cases}
\end{eqnarray}
for $n = 1,2,\ldots$, and
\begin{eqnarray}
  \Omega_{a0}(p)
= \begin{cases}
  \displaystyle
  - \log z_0(p)
  - \sum_{m=1}^\infty \frac{z_0(p)^{-m}}{m}\rd_{a0}v_{0m},\\
  \displaystyle \delta_{ab}\log z_b(p)
  + \rd_{a0}\phi_b
  - \sum_{m=1}^\infty \frac{z_b(p)^{-m}}{m}\rd_{a0}v_{bm},\quad
  b = 1,\ldots,M.
  \end{cases}
\end{eqnarray}
In particular, since $\Omega_{01}(p) = p$,
%%%% replaced up to here (ver 0.1)%%%%%%%%%%%%%%%%%%%%%%%%%%%%%%%%%%%%%%
we have the identities
\begin{eqnarray}
\begin{aligned}
  p &= z_0(p)
      - \sum_{m=1}^\infty
        \frac{z_0(p)^{-m}}{m}\rd_{01}v_{0m}, \\
  p &= \rd_{01}\phi_b
      - \sum_{m=1}^\infty
        \frac{z_b(p)^{-m}}{m}\rd_{01}v_{b m},\quad
      b = 1,\ldots,M,
\end{aligned}
\end{eqnarray}
which imply that the inverse functions
$p = p_0(z)$ and $p = p_b(z)$ of
$z = z_0(p)$ and $z = z_b(p)$
are given explicitly by
\begin{eqnarray}
\begin{aligned}
  p_0(z)
  = z - \sum_{m=1}^\infty \frac{z^{-m}}{m}\rd_{01}v_{0m}
  = \rd_{01}S_0(z), \\
  p_b(z)
  = \rd_{01}\phi_b
    - \sum_{m=1}^\infty \frac{z^{-m}}{m}\rd_{01}v_{b m}
  = \rd_{01}S_b(z).
\end{aligned}
\end{eqnarray}
Consequently,
\begin{eqnarray}
  q_a = \rd_{01}\phi_a,\quad
  r_a = - \rd_{01}v_{a1}.
\end{eqnarray}
Substituting $p = p_\beta(z)$ in
\begin{eqnarray*}
  \left.\rd_{\alpha n}S_\beta(z)\right|_{z=z_\beta(p)}
  = \Omega_{\alpha n}(p)
\end{eqnarray*}
leads to the Hamilton-Jacobi equations
\begin{eqnarray}
  \rd_{\alpha n}S_\beta(z)
  = \Omega_{\alpha n}(\rd_{01}S_\beta(z)).
\end{eqnarray}

\paragraph*{$F$-function}

The $F$-function is defined by the equation
\begin{eqnarray}\label{eq2_22_5}
\begin{aligned}
&\rd_{0n}F = v_{0n}, \quad
  \rd_{an}F = v_{an}, \quad n = 1,2,\ldots,\\
&\rd_{a0}F = - \phi_a
    + \sum_{b=1}^{a-1} t_{b0}\log(-1),
  \quad a = 1,\ldots,M.
\end{aligned}
\end{eqnarray}
The last part containing $\log(-1)$ is slightly different
from the definition of the Madrid group
\cite{MAMM05,MMMA05}  and the previous paper \cite{TT06} of the first two authors, but this is due to arbitrariness
of the $F$-function.  With the $F$-function,
the $S$-functions can be written as
\begin{eqnarray}
\begin{aligned}
  S_0(z)
  &= \sum_{n=1}^\infty t_{0n}z^n + t_{00}\log z
   -  D_0(z)F, \\
  S_a(z)
  &= \sum_{n=1}^\infty t_{an}z^n
     + t_{a0}\log z + \phi_a - D_a(z)F,
\end{aligned}
\end{eqnarray}
where $D_0(z)$ and $D_a(z)$ denote
the following differential operators:
\begin{eqnarray*}
  D_0(z)
  = \sum_{n=1}^\infty \frac{z^{-n}}{n}\rd_{0n}, \quad
  D_a(z)
  = \sum_{n=1}^\infty \frac{z^{-n}}{n}\rd_{an}.
\end{eqnarray*}

\paragraph*{Generalized Faber polynomials and Grunsky coefficients}
The Hamiltonians $\Omega_{\alpha n}(p)$ of the Lax equations
can also be characterized by the generating functions
\begin{eqnarray}
\begin{aligned}
  \log\frac{p_0(z) - q}{z}
  &= - \sum_{n=1}^\infty \frac{z^{-n}}{n}\Omega_{0n}(q), \\
  \log\frac{q - p_a(z)}{q - q_a}
  &= - \sum_{n=1}^\infty \frac{z^{-n}}{n}\Omega_{an}(q).
\end{aligned}
\label{Omega=Faber}
\end{eqnarray}
The left hand sides of these identities are understood
to be rewritten
\begin{eqnarray*}
  \log\frac{p_0(z) - q}{z}
  = \log\frac{p_0(z)}{z}
    + \log\Bigl(1 - \frac{q}{p_0(z)}\Bigr)
\end{eqnarray*}
and
\begin{eqnarray*}
  \log\frac{q - p_a(z)}{q - q_a}
  = \log\Bigl(1 - \frac{p_a(z) - q_a}{q - q_a}\Bigr)
\end{eqnarray*}
and expanded to power series of $q$ and $(q-q_a)^{-1}$,
respectively.

The generalized Grunsky coefficients $b_{ambn} = b_{bnam}$
are defined by the generating functions
\begin{eqnarray}
\begin{aligned}
\log\frac{p_0(z)-p_0(w)}{z-w}
  &= - \sum_{m,n=1}^\infty z^{-m}w^{-n}b_{0m0n},\\
\log\frac{p_0(z)-p_a(w)}{z}
  &= - \sum_{m=1}^\infty\sum_{n=0}^\infty
       z^{-m}w^{-n}b_{0ma n},\\
\log\frac{zw(p_a(z)-p_a(w))}{w-z}
  &= - \sum_{m,n=0}^\infty z^{-m}w^{-n}b_{aman},\\
\log\frac{p_a(z)-p_b(w)}{\epsilon_{ab}}
  &= - \sum_{m,n=0}^\infty z^{-m}w^{-n}b_{ambn}
  \quad (a \not= b).
\end{aligned}
\label{Grunsky}
\end{eqnarray}
They are related to the $F$-function as
\begin{eqnarray}
  \hat{\rd}_{\alpha m}\hat{\rd}_{\beta n} F
  = - b_{\alpha m\beta n}
  \quad (\alpha,\beta = 0,1,\ldots,N),
\end{eqnarray}
where
\begin{eqnarray*}
  \epsilon_{ab}
  = \begin{cases}
    +1 & (a \le b)\\
    -1 & (a > b)
    \end{cases}, \qquad
  \hat{\rd}_{\alpha n}
  = \begin{cases}
    \frac{1}{n}\rd_{\alpha n} & (n \not= 0), \\
    \rd_{\alpha 0} &(n = 0).
    \end{cases}
\end{eqnarray*}

\section{Riemann-Hilbert problem and non-degenerate solutions}

Following the work of the Madrid group \cite{MMMA05,MAMM05},
we now formulate a Riemann-Hilbert problem.
Choose a set of positively oriented simple closed curves $C_1,\ldots,C_M$
and let $D_1,\ldots,D_M$ denote their inside domains.
The Riemann-Hilbert data consist of $M$ pairs $(f_a,g_a)$,
$a = 1,\ldots,M$, of holomorphic functions $f_a = f_a(p,t_{01})$,
$g_a = g_a(p,t_{01})$ of $p,t_{01}$ (defined in a suitable domain)
that satisfy the conditions
\begin{eqnarray}
  \{f_a,g_a\}=  \frac{\rd f_a}{\rd p}\frac{\rd g_a}{\rd t_{01}}
  - \frac{\rd f_a}{\rd t_{01}}\frac{\rd g_a}{\rd p}
  = 1,
\end{eqnarray}
thus defining two-dimensional canonical transformations.
%%%% replaced from here (ver 0.3)%%%%%%%%%%%%%%%%%%%%%%%%%%%%%%%%%%%%%%
The problem is to seek $M+1$ pairs
$(z_\alpha(p),\zeta_\alpha(p))$, $\alpha = 0,1,\ldots,M$,
of functions of $p$ and $\boldsymbol{t}
= \{t_{0n} : n = 1,2,\ldots\} \cup
\{t_{an} \,:\,  a =1,\ldots,M,\, n = 0,1,2,\ldots\}$
that satisfy the following conditions:
%%%% replaced up to here (ver 0.3)%%%%%%%%%%%%%%%%%%%%%%%%%%%%%%%%%%%%%
\begin{itemize}
%%%% replaced from here (ver 0.3)%%%%%%%%%%%%%%%%%%%%%%%%%%%%%%%%%%%%%%
\item[(i)] $z_0(p)$ and $\zeta_0(p)$ are holomorphic
functions on $\CC\setminus(D_1 \cup\ldots\cup D_M)$,
$z_0(p)$ is univalent therein (in particular,
$z'_0(p)$ does not vanish) and, as $p \to \infty$,
%%%% replaced up to here (ver 0.3)%%%%%%%%%%%%%%%%%%%%%%%%%%%%%%%%%%%%%
\begin{eqnarray}
\begin{aligned}
  z_0(p) &= p + O(p^{-1}),\\
  \zeta_0(p) &= \sum_{n=1}^\infty nt_{0n}z_0(p)^{n-1}
    + \frac{t_{00}}{z_0(p)} + O(p^{-2}).
\end{aligned}
\end{eqnarray}
\item[(ii)] $z_a(p)$ and $\zeta_a(p)$ are
holomorphic functions on $D_a$ punctured at a point $q_a \in D_a$,
	   $z^{-1}_a(p)$ is univalent on $D_a$ and, as $p \to q_a$,
\begin{eqnarray}
\begin{aligned}
  z_a(p) &= \frac{r_a}{p-q_a} + O(1),\\
  \zeta_a(p) &= \sum_{n=1}^\infty nt_{an}z_a(p)^{n-1}
    + \frac{t_{a0}}{z_a(p)} + O((p-q_a)^2).
\end{aligned}
\end{eqnarray}
$q_a$ and $r_a$ are functions of the time variables
to be thus determined.
\item[(iii)] For $a = 1,\ldots,M$, the four functions
$z_0(p),\zeta_0(p),z_a(p),\zeta_a(p)$ can be
analytically continued to a neighborhood of $C_a$
and satisfy the functional equations
\begin{eqnarray}
  z_a(p) = f_a(z_0(p),\zeta_0(p)), \quad
  \zeta_a(p) = g_a(z_0(p),\zeta_0(p))
\label{RH-fg}
\end{eqnarray}
therein.
\end{itemize}

Functions $z_\alpha(p)$ satisfying above conditions are solutions of the
universal Whitham hierarchy and $\zeta_\alpha(p)$'s are corresponding
Orlov-Schulman functions, as is proved in \cite{MAMM05}, Theorem 1.

Note that formally we can prove the converse. Namely there exist
Riemann-Hilbert data for each solution of the universal Whitham
hierarhcy.
\begin{proposition}
Let $(z_{\alpha}(p)\,:\,\alpha=0,1,\ldots,M)$ be a solution of the
universal Whitham hierarchy, and
$(\zeta_{\alpha}(p)\,:\,\alpha=0,1,\ldots,M)$ the corresponding
Orlov-Schulman functions. For $a=1,\ldots, M$, there exist functions
$f_{a}(p,t_{01})$ and $g_a(p,t_{01})$ such that
\begin{equation}
\label{eq1_21_9}
\begin{split}
 z_{a}=&f_{ a}(z_{0},\zeta_{0}),\qquad
 \zeta_{a}=g_{ a}(z_{0},\zeta_{0}),
\end{split}
\end{equation}
and
$$
  \left\{f_{a}(p,t_{01}), g_{a}(p,t_{01})\right\}=1.
$$
\end{proposition}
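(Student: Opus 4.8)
The plan is to exploit the fact that $(z_0(p),\zeta_0(p))$ form a \emph{canonical} pair. First I would freeze all time variables except $t_{01}$ and regard $z_0,\zeta_0,z_a,\zeta_a$ as functions of the two Poisson variables $(p,t_{01})$. The canonical commutation relation \eqref{canonical(z,zeta)} gives $\{z_0,\zeta_0\}=1$, so the Jacobian of the map $\Psi\colon(p,t_{01})\mapsto(z_0(p),\zeta_0(p))$ equals $1$ and, in a neighbourhood of $C_a$ where $z_0'$ does not vanish, $\Psi$ is locally biholomorphic. I would then simply \emph{define} $f_a$ and $g_a$ as the functions obtained by transporting $z_a$ and $\zeta_a$ through the inverse of $\Psi$, that is $f_a = z_a\circ\Psi^{-1}$ and $g_a=\zeta_a\circ\Psi^{-1}$, computed at a fixed reference value $\boldsymbol{t}^{(0)}$ of the remaining times. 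By construction the functional equations \eqref{eq1_21_9} hold at $\boldsymbol{t}^{(0)}$.

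The canonical normalization $\{f_a,g_a\}=1$ is then almost immediate. Using the elementary transformation rule for Poisson brackets under a change of the two variables, namely $\{F(u,v),G(u,v)\}_{(p,t_{01})}=\{F,G\}_{(u,v)}\,\{u,v\}_{(p,t_{01})}$ with $u=z_0$, $v=\zeta_0$, and substituting $F=f_a$, $G=g_a$, I obtain
\begin{equation*}
  1=\{z_a,\zeta_a\}=\{f_a(z_0,\zeta_0),g_a(z_0,\zeta_0)\}
   =\{f_a,g_a\}\cdot\{z_0,\zeta_0\}=\{f_a,g_a\},
\end{equation*}
where the outer brackets are taken with respect to $(p,t_{01})$ while $\{f_a,g_a\}$ on the right refers to the two arguments of $f_a,g_a$; here I have used \eqref{canonical(z,zeta)} for $\alpha=a$ and for $\alpha=0$.

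The essential point — and what I expect to be the main obstacle — is to show that the functions $f_a,g_a$ defined at $\boldsymbol{t}^{(0)}$ continue to satisfy \eqref{eq1_21_9} for \emph{all} times, i.e. that they carry no residual dependence on the frozen times. Set $\Phi_a=z_a-f_a(z_0,\zeta_0)$ and $G_a=\zeta_a-g_a(z_0,\zeta_0)$, with $f_a,g_a$ held fixed at their reference-time values, so that $\Phi_a=G_a=0$ at $\boldsymbol{t}^{(0)}$. For each flow I would differentiate using the Lax equations $\rd_{\alpha n}z_\beta=\{\Omega_{\alpha n},z_\beta\}$ and \eqref{lax:zeta}, which crucially are governed by the \emph{same} Hamiltonian $\Omega_{\alpha n}$ for every index $\beta$. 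Since $\{\Omega_{\alpha n},\cdot\}$ is a derivation, the chain rule yields $\rd_{\alpha n}[f_a(z_0,\zeta_0)]=\{\Omega_{\alpha n},f_a(z_0,\zeta_0)\}$, and likewise for $g_a$, whence
\begin{equation*}
  \rd_{\alpha n}\Phi_a=\{\Omega_{\alpha n},\Phi_a\},\qquad
  \rd_{\alpha n}G_a=\{\Omega_{\alpha n},G_a\}.
\end{equation*}
These are linear homogeneous transport equations along the Hamiltonian vector field of $\Omega_{\alpha n}$; since $\Phi_a$ and $G_a$ vanish at $\boldsymbol{t}^{(0)}$ and the flows are compatible (Zakharov--Shabat), uniqueness forces $\Phi_a\equiv G_a\equiv 0$. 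This establishes \eqref{eq1_21_9} for all times with a single time-independent pair $(f_a,g_a)$. The remaining work is analytic rather than structural: one must verify that $\Psi^{-1}$ and the relevant analytic continuations exist in a common neighbourhood of $C_a$, which is precisely why the converse is asserted only \emph{formally}.
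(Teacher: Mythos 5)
Your proposal is correct, and its engine --- the difference of the two sides of \eqref{eq1_21_9} satisfies the same Lax-type transport equation $\rd_{\beta n}X=\{\Omega_{\beta n},X\}$ for every flow, vanishes on a reference slice, and hence vanishes to all orders in the frozen times --- is exactly the engine of the paper's proof, where it appears as the inductive computation of iterated $t$-derivatives of $\tilde{f}_0(z_0,\zeta_0)$ and $\tilde{f}_a(z_a^{-1},-z_a^2\zeta_a)$ at $\tilde{\boldsymbol{t}}=0$. Where you genuinely diverge is in how $(f_a,g_a)$ are manufactured on the slice. The paper goes through the dressing functions of Proposition \ref{p1}: it sets $\tilde{f}_\alpha=e^{-\operatorname{ad}\varphi_\alpha}p$ and $\tilde{g}_\alpha=e^{-\operatorname{ad}\varphi_\alpha}t_{01}$ at $\tilde{\boldsymbol{t}}=0$, so that the identities hold tautologically on the slice (both sides equal $p$, resp.\ $t_{01}$), and then extracts $f_a,g_a$ by solving $\tilde{f}_0(p,t_{01})=\tilde{f}_a(\tilde{p}^{-1},-\tilde{p}^2\tilde{t}_{01})$, $\tilde{g}_0(p,t_{01})=\tilde{g}_a(\tilde{p}^{-1},-\tilde{p}^2\tilde{t}_{01})$ for $(\tilde{p},\tilde{t}_{01})$; canonicity $\{f_a,g_a\}=1$ is inherited because each $e^{\pm\operatorname{ad}\varphi_\alpha}$ is a canonical transformation. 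You instead invert the map $\Psi\colon(p,t_{01})\mapsto(z_0,\zeta_0)$ directly, observing that \eqref{canonical(z,zeta)} says precisely that its Jacobian equals $1$ (so the extra caveat about $z_0'\neq 0$ is not even needed), define $f_a=z_a\circ\Psi^{-1}$, $g_a=\zeta_a\circ\Psi^{-1}$, and obtain canonicity from the multiplicativity of $2\times 2$ Jacobians together with \eqref{canonical(z,zeta)} for both $\alpha=0$ and $\alpha=a$. Your route buys economy: it bypasses the dressing construction entirely --- Proposition \ref{p1} is a nontrivial imported theorem --- using only the canonical relation, the Lax equations \eqref{lax:zeta}, and an implicit-function inversion on the slice, and it makes the origin of $\{f_a,g_a\}=1$ completely transparent. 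The paper's route buys uniformity: the dressing operators package $\tilde{f}_\alpha,\tilde{g}_\alpha$ as globally defined formal objects tied to the Madrid group's dressing scheme used elsewhere in the paper, with the inversion deferred to a single final step. Both arguments are formal in the same sense, as you correctly flag at the end: the analytic continuations to a common neighbourhood of $C_a$ and the domain questions for $\Psi^{-1}$ (and for the time flows, which may carry $(z_0,\zeta_0)$ outside the domain where $f_a,g_a$ were defined) are assumed rather than proved, which is exactly why the paper asserts this converse only formally.
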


\begin{proof}
This is the same as Propositions 4 and 5 of
\cite{MMMA05}, but let us prove it here in our language as in
\cite{TT95}.
%The proof is the same as in \cite{TT95}.
Given a solution
$(z_{\alpha}(p)\,:\,\alpha=0,1,\ldots,M)$ of the universal Whitham
hierarchy, construct the dressing functions $\varphi_{\alpha}(p)$ as
given by Proposition \ref{p1}. 
(Recall that we have put $q^{(0)}_a=0$.)
For any $\alpha$, let
\begin{equation*}
\begin{split}
 \tilde{f}_{\alpha}(p,t_{01})&=
  \exp\left(
    -\text{ad}\,\varphi_{\alpha}(\tilde{\boldsymbol{t}}=0)
  \right) p,
\\
 \tilde{g}_{\alpha}(p,t_{01})&=
  \exp\left(
    -\text{ad}\,\varphi_{\alpha}(\tilde{\boldsymbol{t}}=0)
  \right)t_{01}
\end{split}
\end{equation*}
where $\tilde{\boldsymbol{t}}=\boldsymbol{t}\setminus\{t_{01}\}$.
Notice that
\begin{equation*}
\begin{split}
 z_{0}(p, \tilde{\boldsymbol{t}}=0)
 &=\exp\left(
   \text{ad}\,\varphi_0(\tilde{\boldsymbol{t}}=0)
  \right)p,
\\
 \zeta_{0}(p, \tilde{\boldsymbol{t}}=0)
 &=\exp\left(
    \text{ad}\,\varphi_0(\tilde{\boldsymbol{t}}=0)
  \right)t_{01},
\\
 z_{a}^{-1}(p, \tilde{\boldsymbol{t}}=0)
 &=\exp\left(
    \text{ad}\,\varphi_{a}(\tilde{\boldsymbol{t}}=0)
  \right)p,
\\
 (-z_{a}^2\zeta_{a})(p, \tilde{\boldsymbol{t}}=0)
 &=\exp\left(
    \text{ad}\,\varphi_a(\tilde{\boldsymbol{t}}=0)
  \right)t_{01}.
\end{split}
\end{equation*}
Therefore,
\begin{equation*}
\begin{split}
 &\tilde{f}_{a}
  \left(
    z_{a}^{-1}(p, \tilde{\boldsymbol{t}}=0),
    (-z_{a}^2\zeta_{a})(p, \tilde{\boldsymbol{t}}=0)
  \right)
\\
 =&
 \tilde{f}_0
  \left(
    z_{0}(p, \tilde{\boldsymbol{t}}=0),
    \zeta_{0}(p, \tilde{\boldsymbol{t}}=0)
  \right)
  =p,
\\
 &\tilde{g}_{a}
  \left(
    z_{a}^{-1}(p, \tilde{\boldsymbol{t}}=0),
    (-z_{a}^2\zeta_{a})(p, \tilde{\boldsymbol{t}}=0)
  \right)
\\
 =&
 \tilde{g}_0
 \left(
    z_{0}(p, \tilde{\boldsymbol{t}}=0),
    \zeta_{0}(p, \tilde{\boldsymbol{t}}=0)
 \right)=t_{01}
\end{split}
\end{equation*}
for any $a$. Now
\begin{equation*}
\begin{split}
 \frac{\pa }{\pa t_{\beta n}}
   \tilde{f}_0\left(z_{0},\zeta_{0}\right)
 &=\left\{
    \Omega_{\beta n},
    \tilde{f}_0\left(z_{0},\zeta_{0}\right)
  \right\},
\\
 \frac{\pa }{\pa t_{\beta n}}
   \tilde{f}_{a}\left(z_{a}^{-1},-z_{a}^2\zeta_{a}\right)
 &=\left\{
     \Omega_{\beta n},
     \tilde{f}_{a}\left(z_{a}^{-1},-z_{a}^2\zeta_{a}\right)
\right\},
\end{split}
\end{equation*}
 and similarly for $\tilde{g}_0\left(z_{0},\zeta_{0}\right)$ and
 $\tilde{g}_{a}\left(z_{a}^{-1},-z_{a}^2\zeta_{a}\right)$. Therefore,
\begin{equation*}
\begin{split}
  \left.\frac{\pa }{\pa t_{\beta n}}
  \tilde{f}_0(z_{0},\zeta_{0})
  \right|_{\tilde{\boldsymbol{t}}=0}
  =
  \left.\frac{\pa }{\pa t_{\beta n}}
  \tilde{f}_{a}\left(z_{a}^{-1},-z_{a}^2\zeta_{a}\right)
  \right|_{\tilde{\boldsymbol{t}}=0},
\\
  \left.\frac{\pa }{\pa t_{\beta n}}
  \tilde{g}_0\left(z_{0},\zeta_{0}\right)
  \right|_{\tilde{\boldsymbol{t}}=0}
  =
  \left.\frac{\pa }{\pa t_{\beta n}}
  \tilde{g}_{a}\left(z_{a}^{-1},-z_{a}^2\zeta_{a}\right)
  \right|_{\tilde{\boldsymbol{t}}=0}.
\end{split}
\end{equation*}
In the same way, one can show that
\begin{equation*}
\begin{split}
  \left.
  \frac{\pa }{\pa t_{\beta_k n_k}}\ldots\frac{\pa }{\pa t_{\beta_1 n_1}}
  \tilde{f}_0(z_{0},\zeta_{0})
  \right|_{\tilde{\boldsymbol{t}}=0}
  =
  \left.
  \frac{\pa }{\pa t_{\beta_k n_k}}\ldots\frac{\pa }{\pa t_{\beta_1 n_1}}
  \tilde{f}_{a}\left(z_{a}^{-1},-z_{a}^2\zeta_{a}\right)
  \right|_{\tilde{\boldsymbol{t}}=0},
\\
  \left.
  \frac{\pa }{\pa t_{\beta_k n_k}}\ldots\frac{\pa }{\pa t_{\beta_1 n_1}}
  \tilde{g}_0(z_{0},\zeta_{0})
  \right|_{\tilde{\boldsymbol{t}}=0}
  =
  \left.
  \frac{\pa }{\pa t_{\beta_k n_k}}\ldots\frac{\pa }{\pa t_{\beta_1 n_1}}
  \tilde{g}_{a}\left(z_{a}^{-1},-z_{a}^2\zeta_{a}\right)
  \right|_{\tilde{\boldsymbol{t}}=0}.
\end{split}
\end{equation*}
These show that
$$
  \tilde{f}_0\left(z_{0},\zeta_{0}\right)
  =
  \tilde{f}_{a}\left(z_{a}^{-1},-z_{a}^2\zeta_{a}\right)
\qquad
  \tilde{g}_0\left(z_{0},\zeta_{0}\right)
  =
  \tilde{g}_{a}\left(z_{a}^{-1},-z_{a}^2\zeta_{a}\right).
$$
Notice that by definition,
$$
  \left\{\tilde{f}_{0}(p,t_{01}), \tilde{g}_{0}(p,t_{01})\right\}=1,
\qquad
  \left\{\tilde{f}_{a}(p,t_{01}), \tilde{g}_{a}(p,t_{01})\right\}=1.
$$
 One can solve the equations
\begin{equation*}
  \tilde{f}_0(p,t_{01})
  =
  \tilde{f}_{a}\left(\tilde{p}^{-1},-\tilde{p}^2\tilde{t}_{01}\right),
\qquad
  \tilde{g}_{0}(p,t_{01})
  =
  \tilde{g}_{a}\left(\tilde{p}^{-1}, -\tilde{p}^2\tilde{t}_{01}\right),
\end{equation*}
for $\tilde{p}$ and $\tilde{t}_{01}$, which gives
$$
  \tilde{p}=f_{a}(p,t_{01}),
\qquad
  \tilde{t}_{01}=g_{a}(p,t_{01}).
$$
 This implies \eqref{eq1_21_9}. It is also   straightforward to show
 that $\{f_{a},g_{a}\}=1$.
\end{proof}

We now specialize the Riemann-Hilbert problem to the case where the
canonical transformations are defined by generating functions
$H_a(z_0,z_a)$, $a = 1,\ldots,M$.  The generating functions are assumed
to satisfy the non-degeneracy conditions
\begin{eqnarray}\label{eq3_1_1}
  H_{a,z_0z_a}(z_0,z_a) \not= 0.
\end{eqnarray}
Accordingly, the functional equations (\ref{RH-fg})
connecting the four functions $z_0(p),\zeta_0(p),z_a(p),\zeta_a(p)$
are converted to the generalized string equations
\begin{eqnarray}
  \zeta_0(p) = H_{a,z_0}(z_0(p),z_a(p)), \quad
  \zeta_a(p) = - H_{a,z_a}(z_0(p),z_a(p)).
\label{RH-H}
\end{eqnarray}
Existence of such generating functions $H_a$ under the assumption of
non-degeneracy of $f_a$'s:
\begin{equation}
    \frac{\pa f_{a}}{\pa t_{01}}\neq 0,
\label{fa-non-deg}
\end{equation}
can be proved as in \cite{Teo0906}, \S3.4. In fact, under the assumption
\eqref{fa-non-deg}, we can solve the equation $z_{a}=f_a(z_0,\zeta_0)$
to obtain
\begin{equation*}
\zeta_{0}= A_{a}(z_{0}, z_{a}).
\end{equation*}
Equivalently,
\begin{equation}
\label{eq1_21_6}
    z_a=f_a(z_0,A_a(z_0,z_a)).
\end{equation}
Define $B_{a}(z_{0}, z_{a})$ so that
\begin{equation}
\label{eq1_21_7}
    \zeta_{a}=B_{a}(z_{0}, z_{a})=g_{a}(z_{0},A_{a}(z_{0}, z_{a})).
\end{equation}
Differentiating \eqref{eq1_21_6} with respect to $z_0$ and $z_a$ and
\eqref{eq1_21_7} with respect to $z_0$, we find that
\begin{equation*}
\begin{split}
   0=&
   \frac{\pa f_{a}}{\pa z_{0}}
   +
   \frac{\pa f_{a}}{\pa \zeta_{0}}\frac{\pa A_{a}}{\pa z_{0}}
\hspace{0.5cm}
\Longrightarrow
\hspace{0.5cm}
   \frac{\pa A_{a}}{\pa z_{0}}
   =
   \frac{-\frac{\pa f_{a}}{\pa z_{0}}}
        { \frac{\pa f_{a}}{\pa \zeta_{0}}}
\\
   1= &
   \frac{\pa f_{a}}{\pa \zeta_{0}}\frac{\pa A_{a}}{\pa z_{a}}
\hspace{1.5cm}
   \Longrightarrow
\hspace{0.5cm}
   \frac{\pa A_{a}}{\pa z_{a}}
   =
   \frac{1}{\frac{\pa f_{a}}{\pa \zeta_{0}}},
\\
   \frac{\pa B_{a}}{\pa z_{0}}=&
   \frac{\pa g_{a}}{\pa z_{0}}
   +
   \frac{\pa g_{a}}{\pa \zeta_{0}}\frac{\pa A_{a}}{\pa z_{0}}
\hspace{0.5cm}
   \Longrightarrow
\hspace{0.5cm}
   \frac{\pa B_{a}}{\pa z_{0}}=
   \frac{\pa g_{a}}{\pa z_{0}}
   -
   \frac{\pa g_{a}}{\pa \zeta_{0}}
   \frac{ \frac{\pa f_{a}}{\pa z_{0}}}
        {\frac{\pa f_{a}}{\pa \zeta_{0}}}
   =
   -\frac{1}{\frac{\pa f_{a}}{\pa \zeta_{0}}}.
\end{split}
\end{equation*}
Hence we have $\pa_{z_{a}} A_{a} = - \pa_{z_{0}} B_{a}$, which implies
that there exists $H_{a}(z_0,z_a)$ satisfying \eqref{RH-H} and \eqref{eq3_1_1}.

Our goal in the following is to solve the generalized string equations
(\ref{RH-H}) in the language of geometry of the space
\begin{eqnarray*}
  \calZ := \{(z_\alpha(p) \,:\, \alpha=0,\ldots,M) \mid
    \mbox{properties of $z_\alpha(p)$'s in (i), (ii)} \}
\end{eqnarray*}
of the $M+1$-tuple of functions $z_\alpha(p)$, $\alpha = 0,1,\ldots,M$.
This enables us to understand the universal Whitham hierarchy as a
system of integrable commuting flows on $\calZ$, just as achieved in the
case of the dispersionless Toda hierarchy \cite{Teo0906}.

To this end, we define the functions $t_{0n},t_{00},v_{0n}$ ($n
=1,2,\ldots$) and $t_{an},t_{a0},v_{an}$ ($a = 1,\ldots,M$, $n
=1,2,\ldots$) on $\calZ$ as
\begin{eqnarray}
\begin{aligned}
  nt_{0n} &= \sum_{a=1}^M\frac{1}{2\pi i}\oint_{C_a}
             H_{a,z_0}(z_0(p),z_a(p))z_0(p)^{-n}dz_0(p),\\
  t_{00} &= \sum_{a=1}^M\frac{1}{2\pi i}\oint_{C_a}
             H_{a,z_0}(z_0(p),z_a(p))dz_0(p),\\
  v_{0n} &= \sum_{a=1}^M\frac{1}{2\pi i}\oint_{C_a}
            H_{a,z_0}(z_0(p),z_a(p))z_0(p)^ndz_0(p)
\end{aligned}
\label{t0v0-oint}
\end{eqnarray}
and
\begin{eqnarray}
\begin{aligned}
  nt_{an} &= \frac{1}{2\pi i}\oint_{C_a}
               H_{a,z_a}(z_0(p),z_a(p))z_a(p)^{-n}dz_a(p),\\
  t_{a0} &=  \frac{1}{2\pi i}\oint_{C_a}
              H_{a,z_a}(z_0(p),z_a(p))dz_a(p),\\
  v_{an} &=  \frac{1}{2\pi i}\oint_{C_a}
              H_{a,z_a}(z_0(p),z_a(p))z_a(p)^ndz_a(p).
\end{aligned}
\label{tava-oint}
\end{eqnarray}
This is just a restatement of the string equations (\ref{RH-H}).
$t_{00}$ and $t_{a0}$'s are automatically constrained as
\begin{eqnarray*}
  t_{00} = - \sum_{a=1}^\infty t_{a0}.
\end{eqnarray*}
The contour integrals on the right hand side
of (\ref{t0v0-oint}) are derived by continuously
deforming a simple closed curve $C_\infty$ encircling $p = \infty$ and
separating it from all $D_a$'s. Notice that since $z_a(q_a)=\infty$,
$z_a(p)$ maps the inside of $D_a$ onto the outside of $z_a(C_a)$.
Therefore,
\begin{eqnarray*}
  \oint_{C_a} z_a(p)^mz_a'(p)dp=-\delta_{m,-1}.
\end{eqnarray*}

In the next section, we shall reconstruct $\rd_{\alpha n}$'s as globally
defined vector fields on $\calZ$, and show that $t_{\alpha n}$'s may be
thought of as ``dual'' (local) coordinates on $\calZ$ with respect to
these vector fields.  This is the same geometric situation as observed
in the case of the dispersionless Toda hierarchy \cite{Teo0906}.  The
universal Whitham hierarchy is thus realized as a system of commuting
flows on $\calZ$.  This geometric setting can be cast into the usual
setting in the $\bst$ space by the inverse of the period map
$(z_\alpha(p) : \alpha = 0,1,\ldots,M) \mapsto \bst$.  The functions
$z_\alpha(p)$ and $\zeta_\alpha(p)$ on $\calZ$ are pulled back by this
inverse period map to become a solution of the string equations
(\ref{RH-H}), hence a solution of the universal Whitham hierarchy.

The $S$-function (in particular $\phi_a$) and the $F$-function, too, can
be primarily defined as a function on $\calZ$, then pulled back to the
$\bst$ space.  We shall discussed this issue in later sections.

\section{Construction of vector fields $\rd_{\alpha n}$ on $\calZ$}

Following \cite{Teo0906}, we reconstruct $\rd_{\alpha n}$'s as
vector fields on $\calZ$.

\begin{theorem}
\label{thm:vec-field}
If the vector fields $\rd_{0n}$ ($n = 1,2,\ldots$)
and $\rd_{an}$ ($a = 1,\ldots,M$, $n = 0,1,2,\ldots$)
on $\calZ$ satisfy the equations
\begin{eqnarray}\label{iden}
  \frac{\rd_{\alpha n}z_b(p)}{z_b'(p)}
  - \frac{\rd_{\alpha n}z_0(p)}{z_0'(p)}
  = \frac{\Omega_{\alpha n}'(p)}
         {z_0'(p)z_b'(p)H_{b,z_0z_b}(z_0(p),z_b(p))}
\label{rd(z)}
\end{eqnarray}
on $C_b$ for $b = 1,\ldots,M$, where the primes denote the derivatives
with respect to $p$, then they act on $t_{\beta m}$ ($m=0,1,2,\dots$) as
\begin{eqnarray}
  \rd_{\alpha n}t_{\beta m}
    = \delta_{\alpha\beta}\delta_{nm}
\label{rd(t)}
\end{eqnarray}
and on $v_{\beta m}$ ($m=1,2,\dots$) as
\begin{eqnarray}
  \rd_{\alpha n}v_{\beta m}
    = \begin{cases}
       - nmb_{\alpha n\beta m} & (n \not= 0)\\
       - mb_{\alpha 0\beta m}  & (n = 0)
      \end{cases}
\label{rd(v)}
\end{eqnarray}
\end{theorem}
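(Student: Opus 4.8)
The plan is to differentiate the contour-integral definitions \eqref{t0v0-oint} and \eqref{tava-oint} of $t_{\beta m}$ and $v_{\beta m}$ directly under the integral sign and to collapse the result by means of the hypothesis \eqref{rd(z)}. The curves $C_a$ are held fixed (only the $z_\alpha(p)$ evolve, the poles $q_a$ remaining inside $C_a$), so $\rd_{\alpha n}$ commutes with $\rd/\rd p$ and with $d$ and differentiation under the integral is legitimate. Set $\eta_0 := \rd_{\alpha n}z_0(p)/z_0'(p)$ and $\eta_b := \rd_{\alpha n}z_b(p)/z_b'(p)$, and abbreviate $W_b := H_{b,z_0}(z_0(p),z_b(p))$. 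The chain rule gives $\rd_{\alpha n}W_b = H_{b,z_0z_0}\,\rd_{\alpha n}z_0 + H_{b,z_0z_b}\,\rd_{\alpha n}z_b$; writing $\rd_{\alpha n}z_0 = z_0'\eta_0$, $\rd_{\alpha n}z_b = z_b'\eta_b$ and eliminating $\eta_b$ by \eqref{rd(z)}, the factor $H_{b,z_0z_b}$ cancels and one obtains the key identity $\rd_{\alpha n}W_b = W_b'\,\eta_0 + \Omega_{\alpha n}'(p)/z_0'(p)$ on $C_b$, where the prime denotes $\rd/\rd p$. The parallel computation gives $\rd_{\alpha n}\bigl(H_{b,z_b}(z_0(p),z_b(p))\bigr) = \bigl(H_{b,z_b}\bigr)'\eta_b - \Omega_{\alpha n}'(p)/z_b'(p)$ on $C_b$. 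This is the decisive step: the coupled variation of the two conformal maps is traded for a single explicit term built from $\Omega_{\alpha n}$.

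Substituting these identities into the variation of each integrand produces a clean split. For the $t_{0m}$-integral the integrand is $W_b z_0^{-m}z_0'\,dp$ and its variation equals $\Omega_{\alpha n}'(p)z_0(p)^{-m}\,dp$ plus the exact differential $\rd_p\bigl(W_b z_0^{-m}z_0'\eta_0\bigr)\,dp$; the exact term integrates to zero over the closed curve $C_b$ (the bracketed function is single-valued there), so $m\,\rd_{\alpha n}t_{0m} = \sum_b\frac{1}{2\pi i}\oint_{C_b}\Omega_{\alpha n}'(p)z_0(p)^{-m}\,dp$. An integration by parts and the substitution $w=z_0(p)$, $p=p_0(w)$ recast this as $\rd_{\alpha n}t_{0m} = \sum_b\frac{1}{2\pi i}\oint_{z_0(C_b)}\Omega_{\alpha n}(p_0(w))\,w^{-m-1}\,dw$, and similarly for the remaining families, with $z_0$ replaced by $z_a$ (and $p_0$ by $p_a$) in \eqref{tava-oint} and with $w^{-m-1}$ replaced by $w^{m-1}$ for the $v$-integrals. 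The point of passing to the variable $w$ is that all of $\Omega_{\alpha n}$ and the Grunsky coefficients are encoded, through \eqref{Omega=Faber} and \eqref{Grunsky}, as generating functions in the inverse maps $p_0(z),p_a(z)$.

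The last step is to evaluate these period integrals by deforming the contours to infinity and reading off residues. For the identities \eqref{rd(t)} I would use the expansions $\Omega_{0k}(p_0(w)) = w^k + O(w^{-1})$ and $\Omega_{ak}(p_a(w)) = w^k + O(1)$ (immediate from $z_\alpha(p)^k = \Omega_{\alpha k}(p) + \cdots$), which make the residue at $w=\infty$ equal to $\delta_{\alpha\beta}\delta_{nm}$ while the mixed families contribute $O(w^{-1})$ or lower and drop out. For \eqref{rd(v)} the same expansions, refined by the Grunsky generating functions \eqref{Grunsky}, do the job: substituting $q=p_0(w)$ in the first line of \eqref{Omega=Faber} and comparing with the first line of \eqref{Grunsky} yields $\Omega_{0k}(p_0(w)) = w^k + k\sum_{n\ge 1}b_{0k0n}w^{-n}$, whose residue against $w^{m-1}$ reproduces exactly $-nm\,b_{\alpha n\beta m}$ (and $-m\,b_{\alpha 0\beta m}$ when $n=0$); the remaining three lines of \eqref{Grunsky} handle the cases where one or both indices lie in $\{1,\dots,M\}$.

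The main obstacle is not the algebra of the collapse, which is uniform across all cases, but the careful contour bookkeeping in this last step: fixing the orientations of the image curves $z_\alpha(C_a)$ (recall that $z_a$ sends the inside of $D_a$ to the outside of $z_a(C_a)$), checking that the spurious poles at $w=0$ (zeros of $z_\alpha$) and the branch points of the logarithmic Hamiltonians $\Omega_{a0}=-\log(p-q_a)$ fall outside the region swept out in the deformation, and pinning down signs when $n=0$ or $\alpha\neq\beta$ so that precisely the Grunsky coefficient $b_{\alpha n\beta m}$ with its asymmetric convention $\epsilon_{ab}$ is produced. Each index case $\alpha,\beta\in\{0\}$ versus $\{1,\dots,M\}$ and $n=0$ versus $n\ge 1$ must be matched against the corresponding line of \eqref{Grunsky}, and it is this case analysis, rather than any single hard estimate, where the real work lies.
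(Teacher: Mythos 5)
Your proposal is correct and is essentially the paper's own proof: differentiate the period integrals \eqref{t0v0-oint}--\eqref{tava-oint} under the integral sign, use the hypothesis \eqref{rd(z)} to collapse the variation of each integrand into an exact $p$-derivative (which integrates to zero over $C_b$) plus a term proportional to $\Omega_{\alpha n}'(p)$, and then evaluate the resulting contour integrals (deforming to $C_\infty$ for the $\beta=0$ family) via the expansions \eqref{eq2_22_3} and \eqref{eq2_22_4} of $\Omega_{\alpha n}$ in powers of $z_0(p)$ and $z_b(p)$ derived from \eqref{Omega=Faber} and \eqref{Grunsky}. The one cosmetic deviation is your integration by parts and change of variable $w=z_\alpha(p)$ before reading off residues; the paper instead inserts the Laurent expansions of the single-valued derivatives $\Omega_{\alpha n}'(p)$ directly, which automatically avoids the boundary term that the multivalued $\Omega_{a0}(p)=-\log(p-q_a)$ would contribute to your $n=0$ integration by parts around a contour enclosing $q_a$ --- exactly the bookkeeping point you flagged.
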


\begin{remark}
(\ref{rd(t)}) implies that $t_{\alpha n}$'s may be thought of as a
system of local coordinates on $\calZ$.  (\ref{rd(v)}) shows that the
vector fields $\rd_{\alpha n}$ correspond to the time evolutions of the
universal Whitham hierarchy.
\end{remark}
\begin{remark}
$\rd_{\alpha n}z_\beta(p)$'s are uniquely determined by (\ref{rd(z)}).
Though this is an implication of (\ref{rd(t)}) and (\ref{rd(v)}), one
can directly confirm it as follows.  Let $Z_b(p),Z_0(p)$ and $W_b(p)$
denote the three terms in (\ref{rd(z)}).  Consequently, they satisfy the
equations
\begin{eqnarray}
  Z_b(p) - Z_0(p) = W_b(p)
\end{eqnarray}
for $b = 1,\ldots,M$ in a neighborhood of $C_b$.  As holomorphic
functions, $Z_b(p),Z_0(p)$ are extended to $D_b$ and
$\CP^1\setminus(D_1\cup\cdots\cup D_M)$ respectively, and behave as
\begin{eqnarray*}
  Z_b(p) = O(1) \quad (p \to q_b),\quad
  Z_0(p) = O(p^{-1}) \quad (p \to \infty).
\end{eqnarray*}
One can decompose $Z_0(p)$
in $\CP^1\setminus(D_1 \cup\cdots\cup D_M)$ as
\begin{eqnarray*}
  Z_0(p) = \sum_{a=1}^M Z_{0a}(p), \quad
  Z_{0a}(p) = \frac{1}{2\pi i}\oint_{C_a}\frac{Z_0(q)}{q-p}dq.
\end{eqnarray*}
$Z_{0a}(p)$ is a holomorphic function in
$\CP^1\setminus D_a$, $O(p^{-1})$ as $p \to \infty$,
and can be continued to a neighborhood of $C_a$
by deforming the contour $C_a$ inward.
The foregoing equation for $Z_b(p)$ and $Z_0(p)$
can be thereby rewritten as
\begin{eqnarray*}
  \left(Z_b(p) - \sum_{a\neq b}Z_{0a}(p)\right)
  - Z_{0b}(p)
  = W_b(p).
\end{eqnarray*}
One can consider this equation as splitting
$W_b(p)$ into a sum of holomorphic functions
$W_{b+}(p)$ and $W_{b-}(p)$ defined
in $D_b$ and in $\CP^1\setminus D_b$, respectively.
In particular,
\begin{eqnarray}
  Z_{0b}(p)
  = - W_{b-}(p)
  = - \frac{1}{2\pi i}\oint_{C_b}\frac{W_b(q)}{p-q}dq
  \quad (p \in \CP^1\setminus D_b),
\end{eqnarray}
hence
\begin{eqnarray}
  Z_0(p) = - \sum_{a=1}^M \frac{1}{2\pi i}\oint_{C_a}\frac{W_a(q)}{p-q}dq
  \quad (p \in \CP^1\setminus(D_1 \cup\cdots\cup D_M)).
\end{eqnarray}
One can find a similar integral formula for $Z_b(p)$
as well.
\end{remark}

\begin{proof}[Proof of Theorem \ref{thm:vec-field}]
Let us first consider the action of $\rd_{\alpha n}$ on
\begin{eqnarray*}
  t_{0m}
  = \sum_{b=1}^M\frac{1}{2\pi im}
     \oint_{C_b}H_{b,z_0}(z_0(p),z_b(p))z_0(p)^{-m}z_0'(p)dp.
\end{eqnarray*}
Applying $\rd_{\alpha n}$ to the integrand,
we have the identity
\begin{eqnarray*}
\begin{aligned}
&\rd_{\alpha n}\left(H_{b,z_0}(z_0(p),z_b(p))z_0(p)^{-m}z_0'(p) \right) \\
&= \frac{\rd}{\rd p}
   \left( H_{b,z_0}(z_0(p),z_b(p))z_0(p)^{-m}\rd_{\alpha n}z_0(p)\right)\\
&\quad
   + H_{b,z_0z_b}(z_0(p),z_b(p))z_0'(p)z_b'(p)
     \left(\frac{\rd_{\alpha n}z_b(p)}{z_b'(p)}
          - \frac{\rd_{\alpha n}z_0(p)}{z_0'(p)}\right)z_0(p)^{-m}\\
&= \frac{\rd}{\rd p}
   \left( H_{b,z_0}(z_0(p),z_b(p))z_0(p)^{-m}\rd_{\alpha n}z_0(p)\right)
   + \Omega_{\alpha n}'(p)z_0(p)^{-m}.
\end{aligned}
\end{eqnarray*}
Note that we have used the assumed equation \eqref{iden}
in the last line. Consequently,
\begin{eqnarray*}
\begin{aligned}
  \rd_{\alpha n}t_{0m}
  &= \sum_{b=1}^M\frac{1}{2\pi im}\oint_{C_b}
     \Omega_{\alpha n}'(p)z_0(p)^{-m}dp \\
  &= \frac{1}{2\pi im}\oint_{C_\infty}
     \Omega_{\alpha n}'(p)z_0(p)^{-m}dp,
\end{aligned}
\end{eqnarray*}
by deforming $C_b$'s to a simple closed curve $C_\infty$ encircling $p =
\infty$.

On the other hand, one can deduce from (\ref{Omega=Faber}) and the first
and the second equations in (\ref{Grunsky}) (and the symmetry of
$b_{\alpha n\beta m}$) the following Laurent expansion of
$\Omega_{\alpha n}(p)$'s with respect to $z_0(p)$:
\begin{eqnarray}\label{eq2_22_3}
\begin{aligned}
  \Omega_{0n}(p)
  &= z_0(p)^n + \sum_{m=1}^\infty nb_{0n0m}z_0(p)^{-m},\\
  \Omega_{an}(p)
  &= \sum_{m=1}^\infty nb_{an0m} z_0(p)^{-m}
  \quad (n \ge 1),\\
  \Omega_{a0}(p)
  &= - \log z_0(p) + \sum_{m=1}^\infty b_{a00m}z_0(p)^{-m}.
\end{aligned}
\end{eqnarray}
Inserting the derivatives
\begin{eqnarray*}
\begin{aligned}
  \Omega_{0n}'(p)
  &= nz_0(p)^{n-1}z_0'(p)
     - \sum_{m=1}^\infty nmb_{0n0m}z_0(p)^{-m-1}z_0'(p),\\
  \Omega_{an}'(p)
  &= - \sum_{m=1}^\infty nmb_{an0m}z_0(p)^{-m-1}z_0'(p)
  \quad (n \ge 1),\\
  \Omega_{a0}'(p)
  &= - \frac{z_0'(p)}{z_0(p)}
     - \sum_{m=1}^\infty mb_{a00m}z_0(p)^{-m-1}z_0'(p)
\end{aligned}
\end{eqnarray*}
into the contour integral, we readily obtain
(\ref{rd(t)})  for $\beta = 0$.

In the same way, the action of $\rd_{\alpha n}$ on
\begin{eqnarray*}
  v_{0m}
  = \sum_{b=1}^M \frac{1}{2\pi i}\oint_{C_b}
    H_{b,z_0}(z_0(p),z_b(p))z_b(p)^mz_b'(p)dp
\end{eqnarray*}
can be expressed as
\begin{eqnarray*}
  \rd_{\alpha n}v_{0m}
  = \frac{1}{2\pi i}\oint_{C_\infty}
      \Omega_{\alpha n}'(p)z_0(p)^mdp.
\end{eqnarray*}
This contour integral, too, can be evaluated
by the foregoing Laurent expansions of
$\Omega_{\alpha n}'(p)$.  We can thus derive
(\ref{rd(v)}) for $\beta = 0$.

Let us now consider the action of $\rd_{\alpha n}$ on
\begin{eqnarray*}
\begin{aligned}
  t_{bm}
  &= \frac{1}{2\pi im}\oint_{C_b}H_{b,z_b}(z_0(p),z_b(p))z_b(p)^{-m}z_b'(p)dp,\\
  t_{b0}
  &= \frac{1}{2\pi i}\int_{C_b}H_{b,z_b}(z_0(p),z_b(p))z_b'(p)dp, \\
  v_{bm}
  &= \frac{1}{2\pi i}\int_{C_b}H_{b,z_b}(z_0(p),z_b(p))z_b(p)^mz_b'(p)dp
\end{aligned}
\end{eqnarray*}
As in the previous case, we can deduce that
\begin{eqnarray}\label{eq2_22_1}
\begin{aligned}
  \rd_{\alpha n}t_{bm}
  &= - \frac{1}{2\pi im}\oint_{C_b}\Omega_{\alpha n}'(p)z_b(w)^{-m}dp,\\
  \rd_{\alpha n}t_{b0}
  &= - \frac{1}{2\pi i}\oint_{C_b}\Omega_{\alpha n}'(p)dp,\\
  \rd_{\alpha n}v_{bm}
  &= - \frac{1}{2\pi i}\oint_{C_b}\Omega_{\alpha n}'(p)z_b(p)^mdp,
\end{aligned}
\end{eqnarray}
We can now use the following Laurent expansion
of $\Omega_{\alpha n}(p)$'s with respect to $z_b(p)$ derived from
 \eqref{Omega=Faber} and the second, third and fourth equations of
 \eqref{Grunsky}:
\begin{eqnarray}\label{eq2_22_4}
\begin{aligned}
  \Omega_{0n}(p)
  &= \sum_{m=0}^\infty nb_{0nbm}z_b(p)^{-m},\\
  \Omega_{an}(p)
  &= \delta_{ab}z_b(p)^n + \sum_{m=0}^\infty nb_{anbm}z_b(p)^{-m},
     \quad (n \ge 1),\\
  \Omega_{a0}(p)
  &= \begin{cases}
     - \log\epsilon_{ba} + \sum_{m=0}^\infty b_{a0bm}z_b(p)^{-m}
        & (b \not= a), \\
     \log z_a(p) + \sum_{m=0}^\infty b_{a0am}z_a(p)^{-m}
        & (b = a).
     \end{cases}
\end{aligned}
\end{eqnarray}
Inserting their derivatives
\begin{eqnarray*}
\begin{aligned}
  \Omega_{0n}'(p)
  &= - \sum_{m=0}^\infty nmb_{0nbm}z_b(p)^{-m-1}z_b'(p),\\
  \Omega_{an}'(p)
  &= \delta_{ab}nz_b(p)^{n-1}z_b'(p)
     - \sum_{m=0}^\infty nmb_{anbm}z_b(p)^{-m-1}z_b'(p)
     \quad (n \ge 1),\\
  \Omega_{a0}'(p)
  &= \delta_{ab}\frac{z_b'(p)}{z_b(p)}
     - \sum_{m=0}^\infty mb_{a0bm}z_b(p)^{-m-1}z_b'(p)
\end{aligned}
\end{eqnarray*}
into the contour integrals \eqref{eq2_22_1}, we can confirm
the remaining parts of (\ref{rd(t)}) and (\ref{rd(v)}).
This completes the proof of the theorem.
\end{proof}

\section{Construction of $\phi_a$'s}
We construct the Phi functions $\phi_a$, $a=1,\ldots,M$, as follows:
\begin{multline}
\label{eq2_2_2}
   \phi_a
   =
   \sum_{b=1}^M t_{b0}b_{a0b0}
   +
   \sum_{\gamma=0}^M\sum_{m=1}^{\infty}mt_{\gamma m} b_{a0 \gamma m}
   +
   \sum_{b=1}^{a-1} t_{b0}\log(-1)
\\
   -
   \frac{1}{2\pi i}\sum_{b=1}^M
   \oint_{C_b}\frac{H_b(z_0(p),z_b(p))}{p-q_a}dp.
\end{multline}
\begin{proposition}
The function $\phi_{a}$ defined by \eqref{eq2_2_2}
 satisfies
\begin{equation}\label{eq2_22_2}
\pa_{\beta n} \phi_a  =
   \begin{cases}
    n b_{a 0 \beta n},\qquad &(n\neq 0)
   \\
    b_{a 0 \beta 0}+ \log\epsilon_{ a\beta}, &(n=0)
   \end{cases}.
 \end{equation}
\end{proposition}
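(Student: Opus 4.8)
The plan is to prove \eqref{eq2_22_2} by differentiating the defining formula \eqref{eq2_2_2} directly, using Theorem \ref{thm:vec-field} as the engine. By \eqref{rd(t)} we have $\rd_{\beta n}t_{\gamma m}=\delta_{\beta\gamma}\delta_{nm}$, so differentiating only the explicit time factors in \eqref{eq2_2_2} already produces the claimed right-hand side: the middle sum contributes $n\,b_{a0\beta n}$ when $n\neq 0$, while for $n=0$ the first sum contributes $b_{a0\beta 0}$ and the third sum contributes $\sum_{b=1}^{a-1}\delta_{\beta b}\log(-1)$, which is precisely $\log\epsilon_{a\beta}$. Everything then reduces to showing that the remaining contributions cancel, i.e.
\begin{equation*}
\sum_{b=1}^M t_{b0}\,\rd_{\beta n}b_{a0b0}
+\sum_{\gamma=0}^M\sum_{m=1}^\infty m\,t_{\gamma m}\,\rd_{\beta n}b_{a0\gamma m}
=\frac{1}{2\pi i}\sum_{b=1}^M\oint_{C_b}\rd_{\beta n}\!\left(\frac{H_b(z_0(p),z_b(p))}{p-q_a}\right)dp .
\end{equation*}

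To treat the right-hand side I would use the identity
\begin{equation*}
\rd_{\beta n}\bigl(H_{b,z_0}(z_0,z_b)\,g(z_0)\,z_0'\bigr)
=\frac{\rd}{\rd p}\bigl(H_{b,z_0}(z_0,z_b)\,g(z_0)\,\rd_{\beta n}z_0\bigr)
+\Omega_{\beta n}'(p)\,g(z_0),
\end{equation*}
valid for any function $g$, together with its $z_b$-counterpart; both follow from the assumed equation \eqref{iden} exactly as in the proof of Theorem \ref{thm:vec-field}. Writing $1/(p-q_a)=-\Omega_{a0}'(p)$ and $\rd_{\beta n}q_a=\rd_{01}\rd_{\beta n}\phi_a$, the derivative of the contour integral splits into a total $p$-derivative (which integrates to zero around each $C_b$), terms proportional to $\Omega_{\beta n}'(p)$, and a residue term produced by the moving pole $q_a$. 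For the left-hand side I would differentiate the Grunsky generating functions \eqref{Grunsky}, expressing $\rd_{\beta n}p_\alpha(z)$ through the $v$-derivatives \eqref{rd(v)} furnished by Theorem \ref{thm:vec-field}. Expanding the resulting $\Omega$-integrals by \eqref{eq2_22_3} and \eqref{eq2_22_4}, in powers of $z_0(p)$ and of $z_b(p)$ respectively, and invoking the symmetry $b_{\alpha m\beta n}=b_{\beta n\alpha m}$, the two sides are matched term by term.

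The hard part is precisely this cancellation, and within it two points require the most care. First, the moving pole: because $q_a=\rd_{01}\phi_a$ depends on the point of $\calZ$, the factor $\rd_{\beta n}q_a$ couples back to the quantity under study and produces a residue at $p=q_a\in D_a$, where $z_a(p)\to\infty$; one must check that this residue is exactly absorbed by a matching part of the Grunsky-coefficient derivatives and does not survive in the final answer. Second, the branch bookkeeping for $n=0$: the constant $-\log\epsilon_{ba}$ that distinguishes the $b\neq a$ and $b=a$ cases of the $\Omega_{a0}$-expansion in \eqref{eq2_22_4} must be tracked through the contour integral so that, after all cancellations, nothing but the explicit $\log\epsilon_{a\beta}$ coming from the third sum of \eqref{eq2_2_2} remains. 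Once these two points are settled, the remaining matching of $\Omega$-expansion coefficients against the time-weighted Grunsky derivatives is routine, and \eqref{eq2_22_2} follows.
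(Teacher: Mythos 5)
Your opening reduction is exactly the paper's: differentiating the explicit $t$-factors in \eqref{eq2_2_2} via \eqref{rd(t)} yields the claimed right-hand side (including $\log\epsilon_{a\beta}$ from the third sum), so everything hinges on the cancellation you display. But the machinery you propose for that cancellation has a genuine gap, in two places. First, \eqref{rd(v)} gives $\rd_{\alpha n}v_{\beta m}=-nm\,b_{\alpha n\beta m}$, i.e.\ it produces the Grunsky coefficients themselves, \emph{not} their derivatives $\rd_{\beta n}b_{a0\gamma m}$; extracting the latter from the $v$'s would require second derivatives, equivalently the relation $\hat{\rd}_{\alpha m}\hat{\rd}_{\beta n}F=-b_{\alpha m\beta n}$ on $\calZ$ — but the $F$-function on $\calZ$ is only constructed in Section 6, and its proof uses the present proposition through \eqref{eq2_23_1} and \eqref{eq2_2_3}, so this route is circular. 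Second, your treatment of the moving pole via $\rd_{\beta n}q_a=\rd_{01}\rd_{\beta n}\phi_a$ is also circular: on $\calZ$ the identity $q_a=\rd_{01}\phi_a$ is equivalent to the $(\beta,n)=(0,1)$ case of the very formula \eqref{eq2_22_2} being proved (one checks $b_{a0\,01}=q_a$ from the generating functions), and in any case it re-expresses the unknown $\rd_{\beta n}q_a$ through unknown second derivatives of $\phi_a$, so it cannot close the argument. You correctly identify the residue of the moving pole as the hard part, but your sketch leaves it unresolved rather than resolving it; note also that your auxiliary identity with arbitrary $g(z_0)$, while valid, does not apply here because the integrand $H_b(z_0,z_b)/(p-q_a)$ is not of the form $H_{b,z_0}g(z_0)z_0'$, and in fact the paper's cancellation never invokes \eqref{iden} at all.

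The missing idea is the following closed-form computation, which sidesteps both obstacles. Differentiate, at fixed $p$, the Faber expansions of $\Omega_{a0}(p)=-\log(p-q_a)$ in powers of $z_0(p)^{-1}$ and $z_b(p)^{-1}$ (the last equations of \eqref{eq2_22_3} and \eqref{eq2_22_4}), and compare with $\Omega_{a0}'(p)=-1/(p-q_a)$ differentiated in $p$; this yields
\begin{equation*}
\sum_{m=1}^{\infty}\rd_{\beta n}b_{0ma0}\,z_0(p)^{-m}
=\frac{\rd_{\beta n}q_a}{p-q_a}+\frac{1}{z_0'(p)}\,\frac{\rd_{\beta n}z_0(p)}{p-q_a},
\end{equation*}
and the analogous identity for the $z_b$-expansion. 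Substituting these into your left-hand side, rewritten as contour integrals through the definitions \eqref{t0v0-oint} and \eqref{tava-oint} of the $t_{\alpha m}$, the terms containing $\rd_{\beta n}z_0$ and $\rd_{\beta n}z_b$ cancel \emph{pointwise} against the corresponding terms of the differentiated contour integral, with no term-by-term Grunsky matching needed, and what survives is $\rd_{\beta n}q_a$ times $\frac{1}{2\pi i}\sum_{b}\oint_{C_b}\frac{\rd}{\rd p}\bigl(H_b(z_0(p),z_b(p))/(p-q_a)\bigr)dp$, which vanishes as the integral of an exact derivative of a single-valued function around each closed contour. Thus $\rd_{\beta n}q_a$ is never computed, the moving pole never requires a residue evaluation, and the constants $\log\epsilon_{ba}$ in \eqref{eq2_22_4} drop out automatically upon differentiation — disposing of both of your flagged ``hard parts'' at once. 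Without this step (or some equivalent independent determination of $\rd_{\beta n}b_{a0\gamma m}$ and $\rd_{\beta n}q_a$), your proposal is a plausible plan but not yet a proof.
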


\begin{proof}
\begin{multline*}
  \pa_{\beta n}
    \left( \frac{H_b(z_0(p),z_b(p))}{p-q_a} \right)
\\
    =
    \frac{H_{b,z_0}(z_0(p),z_b(p))}{p-q_a}
     \pa_{\beta n}z_0(p)
    +
    \frac{H_{b,z_b}(z_0(p),z_b(p))}{p-q_a}
     \pa_{\beta n}z_b(p)
\\
    +
    \frac{ H_b(z_0(p),z_b(p))}{(p-q_a)^2} \pa_{\beta n} q_a.
\end{multline*}
Therefore,
\[
   \pa_{\beta n}\phi_a
   =
   T_1(\beta,n)+T_2(\beta,n)+
   \begin{cases}
    n b_{a 0\beta n},\qquad &(n\neq 0)
    \\
    b_{a 0 \beta 0}+\log \epsilon_{ a\beta}, &(n=0)
   \end{cases},
\]
where
\begin{equation*}
 \begin{split}
    T_1(\beta,n)
    =&
    \sum_{b=1}^M t_{b0} \pa_{\beta n}   b_{a0b0}
    +
    \sum_{\gamma=0}^M \sum_{m=1}^{\infty}
    mt_{\gamma m} \pa_{\beta n}   b_{ \gamma ma0}
\\
    =&
    \sum_{m=1}^{\infty}mt_{0 m} \pa_{\beta n}   b_{0 ma0}
    +
    \sum_{b=1}^M\left(
      t_{b0} \pa_{\beta n}   b_{a0b0}
      +
      \sum_{m=1}^{\infty} mt_{b m} \pa_{\beta n}   b_{b ma0}
    \right),
 \end{split}
\end{equation*}
and
\begin{multline*}
   T_2(\beta,n)
   =
   -\frac{1}{2\pi i}\sum_{b=1}^M
   \oint_{C_b}
    \left(
     \frac{H_{b,z_0}(z_0(p),z_b(p))}{p-q_a}
     \pa_{\beta n}  z_0(p)
     \right.
\\
     \left.
     +
     \frac{H_{b,z_b}(z_0(p),z_b(p))}{p-q_a}
      \pa_{\beta n}   z_b(p)
     +
     \frac{ H_b(z_0(p),z_b(p))}{(p-q_a)^2}
      \pa_{\beta n}   q_a
    \right)dp.
\end{multline*}
The goal is to show that $T_1(\beta,n)+T_2(\beta,n)=0$ for all
$(\beta,n)$. By the definition of $t_{0m}$, we have
\begin{multline*}
    \sum_{m=1}^{\infty}
    mt_{0 m} \pa_{\beta n}  b_{0 ma0}
\\
    =\frac{1}{2\pi i}
    \sum_{b=1}^M\oint_{C_b} H_{b,z_0}(z_0(p),z_b(p))z_0'(p)
     \left(
      \sum_{m=1}^{\infty}
       \pa_{\beta n} b_{0 ma0} z_0(p)^{-m}
    \right)dp.
\end{multline*}
Differentiating
\[
   -\log (p -q_a)
   =\Omega_{a 0}(p)
   =-\log z_0(p)+\sum_{m=1}^{\infty}b_{0ma 0}z_0(p)^{-m}
\]
with respect to $t_{\beta n}$, we have
\begin{equation*}
 \begin{split}
    \frac{1}{p-q_a} \pa_{\beta n}  q_a
    =&
    \sum_{m=1}^{\infty}
     \pa_{\beta n} b_{0 ma0}  z_0(p)^{-m}
\\
    &-
    \left(
      \frac{1}{z_0(p)}
      +
      \sum_{m=1}^{\infty}mb_{0ma 0}z_0(p)^{-m-1}
    \right)
     \pa_{\beta n} z_0(p)
\\
     =&
     \sum_{m=1}^{\infty}
       \pa_{\beta n}  b_{0 ma0} z_0(p)^{-m}
     -
      \frac{1}{z_0'(p)}
      \frac{1}{p-q_a} \pa_{\beta n}   z_0(p).
 \end{split}
\end{equation*}
Therefore,
\[
    \sum_{m=1}^{\infty}
     \pa_{\beta n}  b_{0 ma0} z_0(p)^{-m}
    =
    \frac{1}{p-q_a}  \pa_{\beta n} q_a
    +
    \frac{1}{z_0'(p)} \frac{1}{p-q_a} \pa_{\beta n}  z_0(p),
\]
and
\begin{equation*}
 \begin{split}
   \sum_{m=1}^{\infty}
   mt_{0 m} \pa_{\beta n} b_{0 m a0}
   =&
   \frac{1}{2\pi i}\sum_{b=1}^M\oint_{C_b}
    H_{b,z_0}(z_0(p),z_b(p))z_0'(p) \times
\\
   &\qquad \times
   \left(
    \frac{1}{p-q_a}  \pa_{\beta n} q_a
    +
    \frac{1}{z_0'(p)} \frac{1}{p-q_a}  \pa_{\beta n}  z_0(p)
   \right)dp
\\
    =&
    \frac{1}{2\pi i}\sum_{b=1}^M\oint_{C_b}
    \left(
      \frac{H_{b,z_0}(z_0(p),z_b(p))z_0'(p)}{p-q_a}
      \pa_{\beta n} q_a
    \right.
\\
    &\hskip2cm\left.
      +
      \frac{H_{b,z_0}(z_0(p),z_b(p))}{p-q_a}
      \pa_{\beta n} z_0(p)
    \right)dp.
 \end{split}
\end{equation*}
In a similar way, the definition of $t_{b m}$ gives
\begin{multline*}
    t_{b0} \pa_{\beta n} b_{a0b0}
    +\sum_{m=1}^{\infty}mt_{b m} \pa_{\beta n} b_{b m a0}
\\
    =
    \frac{1}{2\pi i}\oint_{C_b}
    H_{b,z_b}(z_0(p),z_b(p))z_b'(p)
    \left(
       \sum_{m=0}^{\infty}
       \pa_{\beta n} b_{b m a0} z_b(p)^{-m}
    \right)dp.
\end{multline*}
Differentiating
\[
    -\log(p-q_a)
    = \Omega_{a 0}(p )
    = \delta_{ab}\log z_b(p)
    + \sum_{m=0}^{\infty}b_{b m a 0}z_b(p)^{-m}
    - \log\epsilon_{ba}
\]
with respect to $t_{\beta n}$ and comparing it with $\Omega'_{a0}(p)$,
we find that
\[
    \frac{1}{p-q_a} \pa_{\beta n} q_a
    =
    \sum_{m=0}^{\infty}
    \pa_{\beta n}b_{b m a0} z_b(p)^{-m}
    -
    \frac{1}{z_b'(p)}\frac{1}{p-q_a} \pa_{\beta n} z_b(p).
\]
Therefore,
\begin{equation*}
 \begin{split}
    &t_{b0} \pa_{\beta n} b_{a0b0}
    +
    \sum_{m=1}^{\infty}
    mt_{b m} \pa_{\beta n} b_{b m a0}
\\
    =&
    \frac{1}{2\pi i}\oint_{C_b}
    \left(
     \frac{H_{b,z_b}(z_0(p),z_b(p))z_b'(p)}{p-q_a}
     \pa_{\beta n} q_a
     +
     \frac{H_{b,z_b}(z_0(p),z_b(p))}{p-q_a}
     \pa_{\beta n} z_b(p)
    \right)dp.
 \end{split}
\end{equation*}
Therefore,
\begin{equation*}
 \begin{split}
    &T_1(\beta,n)+T_2(\beta,n)
\\
    =&
    \frac{1}{2\pi i}\sum_{b=1}^M\oint_{C_b}
    \left(
     \frac{H_{b,z_0}(z_0(p),z_b(p))z_0'(p)}{p-q_a}
     \pa_{\beta n} q_a
    \right.
\\
    &\qquad +\left.
     \frac{H_{b,z_b}(z_0(p),z_b(p))z_b'(p)}{p-q_a}
     \pa_{\beta n} q_a
     -
     \frac{ H_b(z_0(p),z_b(p))}{(p-q_a)^2}
     \pa_{\beta n} q_a \right)dp
\\
    =&
    \frac{1}{2\pi i}\sum_{b=1}^M\oint_{C_b}
    \frac{\pa}{\pa p}
    \left(
     \frac{H_b(z_0(p),z_b(p))}{p-q_a}
    \right)dp \times
    \pa_{\beta n} q_a =0.
 \end{split}
\end{equation*}
This completes the proof.
\end{proof}

Define
\begin{equation}\label{eq2_23_1}\begin{split}
   v_{a0}
   =&
   -\phi_{a}
   +
   \sum_{b=1}^{a-1} t_{b0}\log(-1)
\\
   =&
   -\sum_{b=1}^M t_{b0}b_{a0b0}
   -\sum_{\gamma=0}^M\sum_{m=1}^{\infty}mt_{\gamma m} b_{ a0 \gamma m}
\\
   &+\frac{1}{2\pi i}
    \sum_{b=1}^M\oint_{C_b}\frac{H_b(z_0(p),z_b(p))}{p-q_a}dp.
\end{split}\end{equation}
Then \eqref{eq2_22_2} implies that
\begin{equation}
\label{eq2_2_3}
    \pa_{\beta n}  v_{a0}
    =
    \begin{cases}
    -nb_{ a0 \beta n},\qquad&(n\neq 0),\\
    -b_{a0 \beta 0},\qquad&(n=0)
    \end{cases}.
\end{equation}

\section{Construction of the free energy $F$}
Let $J_{a,1}(z_0,z_a)$ and $J_{a,2}(z_0,z_a)$ be defined so that
\begin{equation}
    -\pa_{z_a}J_{a,1}(z_0,z_a)
    =
    \pa_{z_0}J_{a,2}(z_0,z_a)
    =
    H_a(z_0,z_a)H_{a,z_0z_a}(z_0,z_a).
    \label{J-function}
\end{equation}
We construct the $F$ function as follows:
\begin{equation}
\label{eq2_2_1}
 \begin{split}
   F =&
     \frac{1}{2}\sum_{a=1}^M t_{a0}v_{a0}
    +\frac{1}{2}\sum_{\alpha=0}^M
                \sum_{n=1}^{\infty}t_{\alpha n}v_{\alpha n}
\\
    &
    +\frac{1}{8\pi i}\sum_{a=1}^M\oint_{C_a}
    \Bigl\{
    J_{a,1}(z_0(p),z_a(p))z_0'(p)+J_{a,2}(z_0(p),z_a(p))z_a'(p)
    \Bigr\}dp.
 \end{split}
\end{equation}

\bigskip
\noindent
\begin{proposition}
The $F$ function defined by \eqref{eq2_2_1} satisfies
$$
   \pa_{\beta n} F =v_{\beta n}.
$$
\end{proposition}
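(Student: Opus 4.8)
The plan is to differentiate the three constituents of $F$ in \eqref{eq2_2_1} separately and to show that the remainders quadratic in the Grunsky coefficients, produced respectively by the bilinear part and by the contour integral, cancel against one another, leaving precisely $v_{\beta n}$.

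First I would treat the bilinear part $\tfrac12\sum_a t_{a0}v_{a0}+\tfrac12\sum_{\alpha}\sum_{m\ge1}t_{\alpha m}v_{\alpha m}$. Since $\pa_{\beta n}t_{\gamma m}=\delta_{\beta\gamma}\delta_{nm}$ by \eqref{rd(t)}, the derivatives falling on the $t$-factors contribute $\tfrac12 v_{\beta n}$. The derivatives falling on the $v$-factors are evaluated by \eqref{rd(v)} and \eqref{eq2_2_3}; after using the symmetry $b_{\alpha m\beta n}=b_{\beta n\alpha m}$, they assemble into a single expression I will abbreviate $-\tfrac12 R_{\beta n}$, where for $n\ge1$ one has $R_{\beta n}=n\bigl[\sum_a t_{a0}b_{\beta n a0}+\sum_{\gamma}\sum_{m\ge1}m t_{\gamma m}b_{\beta n\gamma m}\bigr]$, and for $n=0$ the analogous sum without the prefactor $n$.

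The heart of the argument is the contour-integral term $I=\tfrac{1}{8\pi i}\sum_a\oint_{C_a}\{J_{a,1}z_0'+J_{a,2}z_a'\}\,dp$. Applying $\pa_{\beta n}$ under the integral sign and using the defining relations \eqref{J-function}, namely $-\pa_{z_a}J_{a,1}=\pa_{z_0}J_{a,2}=H_aH_{a,z_0z_a}$, I would sort the result into two groups. The terms in which $J_{a,1},J_{a,2}$ are differentiated in the ``opposite'' variable combine into
\[
 H_aH_{a,z_0z_a}\,z_0'z_a'\Bigl(\frac{\pa_{\beta n}z_0}{z_0'}-\frac{\pa_{\beta n}z_a}{z_a'}\Bigr),
\]
into which I substitute the assumed identity \eqref{iden}; this collapses the whole factor to $-H_a\Omega_{\beta n}'$. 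The remaining terms are then exactly $\pa_p\bigl(J_{a,1}\pa_{\beta n}z_0+J_{a,2}\pa_{\beta n}z_a\bigr)-H_a\Omega_{\beta n}'$. Since $J_{a,1}\pa_{\beta n}z_0+J_{a,2}\pa_{\beta n}z_a$ is single-valued on $C_a$, the total $p$-derivative integrates to zero around the closed contour, and I obtain the clean formula
\[
 \pa_{\beta n}I=-\frac{1}{4\pi i}\sum_{a=1}^M\oint_{C_a}H_a(z_0(p),z_a(p))\,\Omega_{\beta n}'(p)\,dp.
\]

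Finally I evaluate this formula and compare with the bilinear remainder. For $n\ge1$, $\Omega_{\beta n}$ is single-valued, so integration by parts replaces $H_a\Omega_{\beta n}'$ by $-(H_{a,z_0}z_0'+H_{a,z_a}z_a')\Omega_{\beta n}$; inserting the Laurent expansions \eqref{eq2_22_3} and \eqref{eq2_22_4} of $\Omega_{\beta n}$ in $z_0$ and in $z_a$, and reading off the coefficients against the definitions \eqref{t0v0-oint} and \eqref{tava-oint} of the $t$'s and $v$'s, yields $\pa_{\beta n}I=\tfrac12 v_{\beta n}+\tfrac12 R_{\beta n}$, so the $R_{\beta n}$ cancels and $\pa_{\beta n}F=v_{\beta n}$. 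For $n=0$ (only Latin $\beta=a$ occurs) I do \emph{not} integrate by parts: although $\Omega_{a0}=-\log(p-q_a)$ is multivalued, its derivative $\Omega_{a0}'=-1/(p-q_a)$ is single-valued, so plugging it in gives $\pa_{a0}I=\tfrac{1}{4\pi i}\sum_b\oint_{C_b}H_b/(p-q_a)\,dp$, which is exactly one half of the explicit contour term in the definition \eqref{eq2_23_1} of $v_{a0}$; matching against \eqref{eq2_23_1} again produces $\tfrac12 v_{a0}+\tfrac12 R_{a0}$, and the cancellation goes through. I expect the main obstacle to be the reduction in the third paragraph: the bookkeeping that organizes the differentiated integrand into the single term $-H_a\Omega_{\beta n}'$ through \eqref{iden} plus an exact $p$-derivative, together with the verification of the single-valuedness that lets the latter drop. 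Note that the weight $\tfrac18$ in \eqref{eq2_2_1} is tuned precisely so that the two resulting copies of $-H_a\Omega_{\beta n}'$ deliver the half-weight needed for the two remainders $\pm\tfrac12 R_{\beta n}$ to annihilate.
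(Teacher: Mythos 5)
Your proposal is correct and takes essentially the same route as the paper's own proof: the same split of $F$ into the bilinear part and the contour term, the same reduction of the differentiated integrand to an exact $p$-derivative plus $-2H_a(z_0,z_a)\Omega_{\beta n}'(p)$ via \eqref{J-function} and \eqref{rd(z)} (your ``two copies'' remark about the $\tfrac18$ weight is exactly the factor $J_{a,1,z_a}-J_{a,2,z_0}=-2H_aH_{a,z_0z_a}$), followed by integration by parts with the expansions \eqref{eq2_22_3}--\eqref{eq2_22_4} for $n\ge 1$ and the direct substitution $\Omega_{b0}'(p)=-1/(p-q_b)$ matched against \eqref{eq2_23_1} for $n=0$. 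The paper's argument is the same computation with the same case split, so nothing is missing.
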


\begin{proof}
 A direct computation shows that
\begin{equation*}
 \begin{split}
  &
 \pa_{\beta n}
  \Bigl\{
    J_{a,1}(z_0(p),z_a(p))z_0'(p)+J_{a,2}(z_0(p),z_a(p))z_a'(p)
  \Bigr\}
\\
  =&
  \frac{\pa}{\pa p}
   \Bigl\{
     J_{a,1}(z_0(p),z_a(p)) \pa_{\beta n} z_0
     +
     J_{a,2}(z_0(p),z_a(p)) \pa_{\beta n} z_a
   \Bigr\}
\\
   & -2H_a(z_0(p),z_a(p))
     (\pa_{\beta n} z_a(p)\, z_0'(p) - z_a'(p)\, \pa_{\beta n} z_0(p) )
\\
  =&
  \frac{\pa}{\pa p}
   \Bigl\{
     J_{a,1}(z_0(p),z_a(p)) \pa_{\beta n} z_0
     +
     J_{a,2}(z_0(p),z_a(p)) \pa_{\beta n} z_a
   \Bigr\}
\\
   & -2H_a(z_0(p),z_a(p))\Omega_{\beta n}'(p)
 \end{split}
\end{equation*}
by using the definition of the vector field \eqref{rd(z)}.
Hence,
\begin{equation*}
  \pa_{\beta n} F
   =I_1(\beta,n)+I_2(\beta,n).
\end{equation*}
where
\begin{align*}
    I_1(\beta,n) &=
    \frac{v_{\beta n}}{2}
    +
    \frac{1}{2}\sum_{a=1}^M t_{a0} \pa_{\beta n} v_{a0}
    +
    \frac{1}{2}
     \sum_{\alpha=0}^M\sum_{m=1}^{\infty}
       t_{\alpha m} \pa_{\beta n} v_{\alpha m},
\\
    I_2(\beta,n) &=
    -\frac{1}{4\pi i}
     \sum_{a=1}^M\oint_{C_a}H_a(z_0(p),z_a(p))\Omega_{\beta n}'(p)dp.
\end{align*}

Now if $n\neq 0$, integration by parts shows that
\begin{equation*}
 \begin{split}
    I_2(\beta,n)
    =&
    -\frac{1}{4\pi i}\sum_{a=1}^M\oint_{C_a}
     H_a(z_0(p),z_a(p))\Omega_{\beta n}'(p)dp
\\
    =&
    \frac{1}{4\pi i}\sum_{a=1}^M\oint_{C_a}
    H_{a,z_0}(z_0(p),z_a(p))z_0'(p)\Omega_{\beta n}(p) dp
\\
    &+
    \frac{1}{4\pi i}\sum_{a=1}^M\oint_{C_a}
    H_{a,z_a}(z_0(p),z_a(p))z_a'(p)\Omega_{\beta n}(p) dp.
 \end{split}
\end{equation*}
If $\beta=0$, the first equation in \eqref{eq2_22_3} and the first
 equation in \eqref{eq2_22_4} show that
\begin{equation*}
 \begin{split}
    I_2(0,n)
    =&
    \frac{1}{4\pi i}\sum_{a=1}^M\oint_{C_a}
    H_{a,z_0}(z_0(p),z_a(p))z_0'(p) \times
\\
    &\qquad \times\left(
     z_0(p)^n + n\sum_{m=1}^{\infty}  b_{0m 0n}z_0(p)^{-m}
    \right)dp
\\
    &+
    \frac{1}{4\pi i}\sum_{a=1}^M\oint_{C_a}
    H_{a,z_a}(z_0(p),z_a(p))z_a'(p) \times
\\
    &\qquad\times\left(
      n\sum_{m=0}^{\infty}  b_{0n a m }z_a(p)^{-m}
    \right)dp
\\
    =&
    \frac{v_{0n}}{2}
    +
    \frac{1}{2}\sum_{m=1}^{\infty}nmb_{0m0n}t_{0m}
    +
    \frac{1}{2}\sum_{a=1}^Mnb_{0na0}t_{a0}
\\
    &+
    \frac{1}{2}\sum_{a=1}^M\sum_{m=1}^{\infty}nmb_{0nam}t_{am}
\\
    =&
    \frac{v_{0n}}{2}
    -
    \frac{1}{2}\sum_{a=1}^M t_{a0} \pa_{0n} v_{a0}
    -\frac{1}{2}\sum_{\alpha=0}^M\sum_{m=1}^{\infty}
    t_{\alpha m} \pa_{0n} v_{\alpha m},
 \end{split}
\end{equation*}
by the definition of $t_{\alpha n}$ (\ref{t0v0-oint}, \ref{tava-oint}) and
 actions of $\pa_{\alpha n}$ on $v_{\beta m}$ (\ref{rd(v)}, \ref{eq2_2_3}). 
Therefore,
\[
  \pa_{0n}F =I_1(0,n)+I_2(0,n)=v_{0 n}.
\]
If $\beta=b\neq 0$, $n\neq 0$, the second equation in \eqref{eq2_22_3}
 and the second equation in \eqref{eq2_22_4} show that
\begin{equation*}
 \begin{split}
    I_2(b,n)
    =&
    \frac{1}{4\pi i}\sum_{a=1}^M\oint_{C_a}
    H_{a,z_0}(z_0(p),z_a(p))z_0'(p) \times
\\
    &\qquad\times\left(
    n\sum_{m=1}^{\infty}b_{b n0m}z_0(p)^{-m}\right)dp
\\
    &+
    \frac{1}{4\pi i}\sum_{a=1}^M\oint_{C_a}
    H_{a,z_a}(z_0(p),z_a(p))z_a'(p) \times
\\
    &\qquad \times
    \left(
     \delta_{ab}z_a(p)^n
     +
     n\sum_{m=0}^{\infty}b_{b na m}z_a(p)^{-m}
    \right)dp
\\
    =&
    \frac{1}{2}\sum_{m=0}^{\infty}nmb_{b n0m}t_{0m}
    +
    \frac{v_{bn}}{2}+\frac{1}{2}\sum_{a=1}^M nb_{bn a0} t_{a0}
\\
    &+
    \frac{1}{2}\sum_{a=1}^M\sum_{m=1}^{\infty}nm b_{bn am}t_{am}
\\
    =&
    \frac{v_{bn}}{2}
    -
    \frac{1}{2}\sum_{a=1}^M t_{a0} \pa_{bn} v_{a0}
    -
    \frac{1}{2}\sum_{\alpha=0}^M\sum_{m=1}^{\infty}
    t_{\alpha m} \pa_{bn} v_{\alpha m},
 \end{split}
\end{equation*}
again by (\ref{t0v0-oint}, \ref{tava-oint}) and (\ref{rd(v)}, \ref{eq2_2_3}). 
Therefore,
\[
   \pa_{bn}F =I_1(b,n)+I_2(b,n)=v_{b n}.
\]
Now if $\beta=b\neq 0$, $n=0$, we have
\[
    \Omega_{b0}'(p)=-\frac{1}{p-q_{b}}.
\]
Therefore,
\begin{equation*}
 \begin{split}
    I_2(b,0)
    &=
    -\frac{1}{4\pi i}\sum_{a=1}^M\oint_{C_a}
    H_a(z_0(p),z_a(p))\Omega_{b0}'(p)dp
\\
    &=
    \frac{1}{4\pi i}\sum_{a=1}^M\oint_{C_a}
    \frac{H_a(z_0(p),z_a(p))}{p-q_b}dp.
 \end{split}
\end{equation*}
On the other hand,
\begin{equation*}
 \begin{split}
    I_1(b,0)
    =&
   \frac{v_{b0}}{2}
    +
    \frac{1}{2}\sum_{a=1}^M t_{a0} \pa_{b0} v_{a0}
    +
    \frac{1}{2}\sum_{\alpha=0}^M\sum_{m=1}^{\infty}
    t_{\alpha m} \pa_{b0} v_{\alpha m}
\\
    =&
    \frac{v_{b0}}{2}
    -
    \frac{1}{2}\sum_{a=1}^M t_{a0}b_{a0b0}
    -
    \frac{1}{2}\sum_{\alpha=0}^M\sum_{m=1}^{\infty}
      m t_{\alpha m}b_{ \alpha mb0},
 \end{split}
\end{equation*}
by (\ref{rd(v)}, \ref{eq2_2_3}). 
Hence,
\begin{equation*}
 \begin{split}
    &I_1(b,0)+I_2(b,0)
\\
    =&
    \frac{v_{b0}}{2}
    -
    \frac{1}{2}
    \sum_{a=1}^M t_{a0}b_{a0b0}
     -
    \frac{1}{2}
    \sum_{\alpha=0}^M\sum_{m=1}^{\infty} m t_{\alpha m}b_{ \alpha mb0}
\\
    & +
    \frac{1}{4\pi i}\sum_{a=1}^M\oint_{C_a}
    \frac{H_a(z_0(p),z_a(p))}{p-q_b}dp
\\
    =&
    \frac{v_{b0}}{2}+\frac{v_{b0}}{2}=v_{b0}
 \end{split}
\end{equation*}
because of \eqref{eq2_23_1}.
This completes the proof.
\end{proof}
This proposition and the definition of $v_{a0}$ \eqref{eq2_23_1} shows that the $F$ function  indeed satisfies \eqref{eq2_22_5}.

\section{Special String Equations}
In this section, we consider the special case where the generating
functions $H_a(z_0,z_a)$, $a=1,\ldots, M,$ have the form
\begin{equation*}
    H_a(z_0,z_a)=z_0^{\nu_0}z_a^{\nu_a}, \qquad
    \nu_0, \nu_a\in \mathbb{N},
\end{equation*}
so  that the string equations \eqref{RH-H} become
\begin{equation}
\label{eq3_1_2}
 \begin{split}
  \zeta_0(p)=&\nu_0 z_0(p)^{\nu_0-1}z_a(p)^{\nu_a},\\
  \zeta_a(p)=&-\nu_az_0(p)^{\nu_0}z_a(p)^{\nu_a-1}
 \end{split}
\end{equation}
for $p\in C_a$. These string equations were discussed in \cite{MAMM05}.
>From \eqref{eq3_1_2} and \eqref{zeta-expansion}, we have
\begin{equation}
\label{eq3_1_3}
 \begin{split}
  \nu_0 z_0(p)^{\nu_0-1}z_a(p)^{\nu_a}
  =&\sum_{n=1}^\infty nt_{0n}z_0(p)^{n-1}
    + \frac{t_{00}}{z_0(p)}
    + \sum_{n=1}^\infty z_0(p)^{-n-1}v_{0n}, \\
  -\nu_az_0(p)^{\nu_0}z_a(p)^{\nu_a-1}
  =& \sum_{n=1}^\infty nt_{an}z_a(p)^{n-1}
    + \frac{t_{a0}}{z_a(p)}
    + \sum_{n=1}^\infty z_a(p)^{-n-1}v_{an}
 \end{split}
\end{equation}
for $p\in C_a$. The definitions of $t_{\alpha n}$ and $v_{\alpha n}$
\eqref{t0v0-oint} and \eqref{tava-oint} then become
\begin{eqnarray}
 \label{eq3_1_4}
  \begin{aligned}
  nt_{0n} &= \sum_{a=1}^M\frac{\nu_0}{2\pi i}\oint_{C_a}
             z_0(p)^{\nu_0-n-1}z_a(p)^{\nu_a}dz_0(p),\\
  t_{00} &=  \sum_{a=1}^M\frac{\nu_0}{2\pi i}\oint_{C_a}
             z_0(p)^{\nu_0-1}z_a(p)^{\nu_a}dz_0(p),\\
  v_{0n} &= \sum_{a=1}^M\frac{\nu_0}{2\pi i}\oint_{C_a}
             z_0(p)^{\nu_0+n-1}z_a(p)^{\nu_a}dz_0(p)
  \end{aligned}
% \label{t0v0-oint_2}
\end{eqnarray}
and
\begin{eqnarray}
\begin{aligned}
  nt_{an} &= \frac{\nu_a}{2\pi i}\oint_{C_a}
               z_0(p)^{\nu_0}z_a(p)^{\nu_a-n-1}dz_a(p),\\
  t_{a0} &=  \frac{\nu_a}{2\pi i}\oint_{C_a}
              z_0(p)^{\nu_0}z_a(p)^{\nu_a-1} dz_a(p),\\
  v_{an} &=  \frac{\nu_a}{2\pi i}\oint_{C_a}
              z_0(p)^{\nu_0}z_a(p)^{\nu_a+n-1}dz_a(p).
\end{aligned}
\label{tava-oint_2}
\end{eqnarray}
The functions $J_{a,1}(z_0,z_a)$ and $J_{a,2}(z_0,z_a)$
\eqref{J-function} can be chosen to be
\begin{equation*}
  J_{a,1}(z_0,z_a)=-\frac{\nu_0}{2}z_0^{2\nu_0-1}z_a^{2\nu_a},\qquad
  J_{a,2}(z_0,z_a)=\frac{\nu_a}{2}z_0^{2\nu_0}z_a^{2\nu_a-1}.
\end{equation*}
The free energy \eqref{eq2_2_1} then becomes
\begin{equation}
\begin{split}
   F =&
     \frac{1}{2}\sum_{a=1}^M t_{a0}v_{a0}
    +\frac{1}{2}\sum_{\alpha=0}^M
                \sum_{n=1}^{\infty}t_{\alpha n}v_{\alpha n}
\\
    &
    -\frac{\nu_0}{16\pi i}
    \sum_{a=1}^M\oint_{C_a}z_0(p)^{2\nu_0-1}z_a(p)^{2\nu_a}dz_0(p)
    \\
    &
    +\frac{1}{16\pi i}
    \sum_{a=1}^M\oint_{C_a}\nu_az_0(p)^{2\nu_0}z_a(p)^{2\nu_a-1}dz_a(p)
 \end{split}
\end{equation}
Using \eqref{eq3_1_3} and \eqref{eq3_1_4}, we find that
\begin{equation}
 \begin{split}
    &-\frac{\nu_0}{16\pi i}
     \sum_{a=1}^M\oint_{C_a} z_0(p)^{2\nu_0-1}z_a(p)^{2\nu_a}dz_0(p)
\\
    =
    &-\frac{1}{16\pi i}
    \sum_{a=1}^M\oint_{C_a}
     \left(\nu_0 z_0(p)^{\nu_0-1}z_a(p)^{\nu_a}\right)
      z_0(p)^{\nu_0}z_a(p)^{\nu_a}dz_0(p)
\\
    =
    &-\frac{1}{16\pi i}\sum_{a=1}^M\oint_{C_a}
    \left(
     \sum_{n=1}^\infty nt_{0n}z_0(p)^{n-1}
     + \frac{t_{00}}{z_0(p)}
     + \sum_{n=1}^\infty z_0(p)^{-n-1}v_{0n}\right)
    \\
    &\hspace{4cm}\times
      z_0(p)^{\nu_0}z_a(p)^{\nu_a}dz_0(p)
\\
    =
    &-\frac{1}{8\nu_0}
    \left(  2 \sum_{n=1}^\infty nt_{0n}v_{0n} +t_{00}^2\right).
 \end{split}
\end{equation}
Similarly, one can show that
\begin{equation}
 \begin{split}
    \frac{1}{16\pi i}
    \oint_{C_a}\nu_az_0(p)^{2\nu_0}z_a(p)^{2\nu_a-1}dz_a(p)
    =&
    -\frac{1}{8\nu_a}
    \left(  2 \sum_{n=1}^\infty nt_{an}v_{an} +t_{a0}^2\right).
 \end{split}
\end{equation}
Therefore, the free energy is given explicitly by
\begin{equation*}
 F =
 -\frac{1}{8}\left(\frac{t_{00}^2}{\nu_0}
 +\sum_{a=1}^{M}\frac{t_{a0}^2}{\nu_a}\right)
 + \frac{1}{2}\sum_{a=1}^M t_{a0}v_{a0}
 +\frac{1}{2}\sum_{\alpha=0}^M
             \sum_{n=1}^{\infty}
             \left(1- \frac{n}{2\nu_{\alpha}}\right)t_{\alpha n}v_{\alpha n}.
\end{equation*}

\end{document}